\theoremstyle{definition}
\newtheorem{prop}{Proposition}[section]
\newtheorem{thm}[prop]{Theorem}
\newtheorem{cor}[prop]{Corollary}
\newtheorem{lem}[prop]{Lemma}
\newtheorem{defn}{Definition}[section]
\newtheorem{exa}{Example}[section]
\newtheorem{rem}{Remark}
\begin{document}

\newcommand{\vA}{{\bf A}}
\newcommand{\vAtilde}{\widetilde{\bf A}}

\newcommand{\vB}{{\bf B}}
\newcommand{\vBtilde}{\widetilde{\bf B}}

\newcommand{\vC}{{\bf C}}
\newcommand{\vD}{{\bf D}}
\newcommand{\vH}{{\bf H}}
\newcommand{\vI}{{\bf I}}

\newcommand{\vY}{{\bf Y}}
\newcommand{\vZ}{{\bf Z}}

\newcommand{\vJ}{{\bf J}}

\newcommand{\vM}{{\bf M}}
\newcommand{\vN}{{\bf N}}
\newcommand{\vU}{{\bf U}}
\newcommand{\vV}{{\bf V}}
\newcommand{\vT}{{\bf T}}
\newcommand{\vR}{{\bf R}}
\newcommand{\vS}{{\bf S}}

\newcommand{\va}{{\bf a}}
\newcommand{\vb}{{\bf b}}
\newcommand{\vc}{{\bf c}}

\newcommand{\ve}{{\bf e}}
\newcommand{\vh}{{\bf h}}
\newcommand{\vp}{{\bf p}}
\newcommand{\vs}{{\bf s}}

\newcommand{\vu}{{\bf u}}
\newcommand{\vv}{{\bf v}}
\newcommand{\vw}{{\bf w}}
\newcommand{\vx}{{\bf x}}
\newcommand{\vr}{{\bf r}}
\newcommand{\vhx}{{\widehat{\bf x}}}
\newcommand{\vtx}{{\widetilde{\bf x}}}
\newcommand{\vhp}{{\widehat{\bf p}}}
\newcommand{\vy}{{\bf y}}
\newcommand{\vz}{{\bf z}}

\newcommand{\tx}{{\widetilde{x}}}

\newcommand{\vj}{{\bf j}}
\newcommand{\vzero}{{\bf 0}}
\newcommand{\vone}{{\bf 1}}
\newcommand{\vbeta}{{\boldsymbol \beta}}
\newcommand{\vchi}{{\boldsymbol \chi}}

\newcommand{\tA}{\textrm A}
\newcommand{\tB}{\textrm B}
\newcommand{\A}{\mathcal A}
\newcommand{\B}{\mathcal B}
\newcommand{\C}{\mathcal C}
\newcommand{\D}{\mathcal D}
\newcommand{\E}{\mathcal E}
\newcommand{\F}{\mathcal F}
\newcommand{\G}{\mathcal G}
\newcommand{\M}{\mathcal M}
\newcommand{\HH}{\mathcal H}
\newcommand{\PP}{\mathcal P}

\newcommand{\Q}{\mathcal Q}
\newcommand{\Qb}{\bar{\mathcal Q}}
\newcommand{\Db}{{\bar{\Delta}}}

\newcommand{\pQ}{{\bf p}\mathcal Q}
\newcommand{\pQb}{{\bf p}\bar{\mathcal Q}}

\newcommand{\LL}{\mathcal L}
\newcommand{\R}{\mathcal R}
\newcommand{\SSS}{\mathcal S}
\newcommand{\U}{\mathcal U}
\newcommand{\V}{\mathcal V}
\newcommand{\Y}{\mathcal Y}
\newcommand{\Z}{\mathcal Z}

\newcommand{\Pg}{{{\mathcal P}_{\rm gram}}}
\newcommand{\Pgint}{{{\mathcal P}^\circ_{\rm gram}}}
\newcommand{\Pgrc}{{{\mathcal P}_{\rm GRC}}}
\newcommand{\Pgrcint}{{{\mathcal P}^\circ_{\rm GRC}}}
\newcommand{\Pint}{{{\mathcal P}^\circ}}
\newcommand{\Ag}{{\bf A}_{\rm gram}}

\newcommand{\CC}{\mathbb C} 
\newcommand{\RR}{\mathbb R}
\newcommand{\ZZ}{\mathbb Z}
\newcommand{\nonneg}{\ZZ_{\ge 0}}
\newcommand{\FF}{\mathbb F}
\newcommand{\KK}{\mathbb K}

\newcommand{\Fnd}{\FF_q^{n^{\otimes d}}}
\newcommand{\Knd}{\KK^{n^{\otimes d}}}

\newcommand{\ceiling}[1]{\left\lceil{#1}\right\rceil}
\newcommand{\floor}[1]{\left\lfloor{#1}\right\rfloor}
\newcommand{\bbracket}[1]{\left\llbracket{#1}\right\rrbracket}

\newcommand{\inprod}[1]{\left\langle{#1}\right \rangle}


\newcommand{\beas}{\begin{eqnarray*}} 
\newcommand{\eeas}{\end{eqnarray*}} 

\newcommand{\bm}[1]{{\mbox{\boldmath $#1$}}} 

\newcommand{\sizeof}[1]{\left\lvert{#1}\right\rvert}
\newcommand{\wt}{{\rm wt}} 
\newcommand{\capty}{\alpha} 

\newcommand{\supp}{{\rm supp}} 
\newcommand{\dg}{d_{\rm gram}} 
\newcommand{\da}{d_{\rm asym}} 
\newcommand{\dist}{{\rm dist}} 
\newcommand{\ssyn}{s_{\rm syn}}
\newcommand{\sseq}{s_{\rm seq}}
\newcommand{\nullplus}{{\rm Null}_{>\vzero}}

\newcommand{\tworow}[2]{\genfrac{}{}{0pt}{}{#1}{#2}}
\newcommand{\qbinom}[2]{\left[ {#1}\atop{#2}\right]_q}

\newcommand{\Lovasz}{Lov\'{a}sz }
\newcommand{\Mobius}{M\"obius}
\newcommand{\etal}{\emph{et al.}}

\newcommand{\todo}[1]{{\color{red} (TODO: #1) }}


\title{Rates of DNA Sequence Profiles for\\Practical Values of Read Lengths}

\author{
Zuling Chang\IEEEauthorrefmark{1}, 
Johan Chrisnata\IEEEauthorrefmark{2}, 
Martianus Frederic Ezerman\IEEEauthorrefmark{2}, and Han Mao Kiah\IEEEauthorrefmark{2}\\
\IEEEauthorblockA{\IEEEauthorrefmark{1}School of Mathematics and Statistics, Zhengzhou University, China\\
\IEEEauthorrefmark{2}School of Physical and Mathematical Sciences, Nanyang Technological University, Singapore\\
		 Emails: ${\rm zuling\_chang}$@zzu.edu.cn, ${\rm\{jchrisnata,\, fredezerman,\, hmkiah\}}$@ntu.edu.sg}  
\thanks{The work of Z.~Chang is supported by the Joint Fund of the National Natural Science Foundation of China under Grant U1304604. Research Grants TL-9014101684-01 and MOE2013-T2-1-041 support M.~F.~Ezerman. }
\thanks{Earlier results of this paper were presented at the 2016 Proceedings of the IEEE ISIT \cite{Chang:2016}.}

}


\maketitle

\begin{abstract}

A recent study by one of the authors has demonstrated the importance of profile vectors in DNA-based data storage.
We provide exact values and lower bounds on the number of profile vectors for finite values of alphabet size $q$, 
read length $\ell$, and word length $n$.
Consequently, we demonstrate that for $q\ge 2$ and  $n\le q^{\ell/2-1}$,
the number of profile vectors is at least $q^{\kappa n}$ with $\kappa$ very close to 1.
In addition to enumeration results, we provide a set of efficient encoding and decoding algorithms 
for each of two particular families of profile vectors.
\end{abstract}

\begin{IEEEkeywords}
DNA-based data storage, profile vectors,  Lyndon words, synchronization, de Bruijn sequences.
\end{IEEEkeywords}

\section{Introduction}


Despite advances in traditional data recording techniques,
the emergence of Big Data platforms and energy conservation issues
impose new challenges to the storage community in terms of identifying 
high volume, nonvolatile, and durable recording media. 
The potential for using macromolecules for ultra-dense storage was recognized as early as in the 1960s.
Among these macromolecules, DNA molecules 
stand out due to their biochemical robustness and high storage capacity.

In the last few decades, the technologies for synthesizing (writing) artificial DNA and 
for massive sequencing (reading) have reached attractive levels of efficiency and accuracy.
Building upon the rapid growth of DNA synthesis and sequencing technologies, 
two laboratories recently outlined architectures for archival DNA-based storage \cite{Church.etal:2012,Goldman.etal:2013}. 
The first architecture achieved a density of 700 TB/gram, while the second approach raised the density to 2.2 PB/gram. 
To further protect against errors, Grass \etal{} later incorporated Reed-Solomon error-correction schemes and 
encapsulated the DNA media in silica \cite{Grass.etal:2015}.
Yazdi \etal{} recently proposed a completely different approach and provided a random access and 
rewritable DNA-based storage system \cite{Yazdi.etal:2015,Yazdi.etal:2015b}.

More recently, to control specialized errors arising from sequencing platforms, two families of codes were introduced 
by Gabrys \etal \cite{Gabrys.etal:2015} and Kiah \etal \cite{Kiah.etal:2015a}. 
The former looked at miniaturized nanopore sequencers such as MinION, 
while the latter focused on errors arising from high-throughput sequencers such as Illumina, 
which arguably is the more mature technology.
The latter forms the basis for this work.
In particular, we examine the concept of {\em DNA profile vectors} introduced by Kiah \etal \cite{Kiah.etal:2015a}.

In this channel model, to store and retrieve information in DNA, 
one starts with a desired information sequence encoded into a sequence or word defined over the nucleotide alphabet 
$\{{\tt A}, {\tt C}, {\tt G}, {\tt T}\}$. 
The \emph{DNA storage channel} models a physical process which takes as its input the sequence of length $n$, and 
synthesizes (writes) it physically into a macromolecule string.
To retrieve the information, the user can proceed using several read technologies. 
The most common sequencing process, implemented by Illumina, 
makes numerous copies of the string or amplifies the string, 
and then fragments all copies of the string into a collection of substrings (reads) 
of approximately the same length $\ell$, so as to produce a large number of overlapping {\em reads}. 
Since the concentration of all (not necessarily) distinct substrings within the mix is usually assumed to be uniform, one may normalize the concentration of all subsequences by the concentration of the least abundant substring.
As a result, one actually observes substring concentrations reflecting the frequency of the substrings in \emph{one copy} of the original string.
Therefore, we model the output of the channel as an 
\emph{unordered subset of reads}. This set may be summarized by its multiplicity vector, 
which we call the \emph{output profile vector}.

We assume an errorless channel and 
observe that it is possible for different words or strings to have an identical profile vector.
Hence, even without errors, the channel may be unable to distinguish between certain pairs of words.
Our task is then to enumerate all distinct profile vectors for fixed values of $n$ and $\ell$ over a $q$-ary alphabet.

In the case of arbitrary $\ell$-substrings, the problem of enumerating all valid profile vectors 
was addressed by Jacquet \etal{} in the context of ``Markov types'' \cite{Jacquet.etal:2012}. 
Kiah \etal{} then extended the enumeration results to profiles with specific $\ell$-substring constraints
so as to address certain considerations in DNA sequence design \cite{Kiah.etal:2015a}.
In particular, for fixed values of $q$ and $\ell$, 
the number of profile vectors is known to be $\Theta\left(n^{q^\ell-q^{\ell-1}}\right)$.

However, determining the coefficient for the dominating term $n^{q^\ell-q^{\ell-1}}$ is a computationally difficult task. It has been determined for only very small values of $q$ and $\ell$ in \cite{Jacquet.etal:2012, Kiah.etal:2015a}.
Furthermore, it is unclear how accurate the asymptotic estimate $\Theta\left(n^{q^\ell-q^{\ell-1}}\right)$ is for practical values of $n$.
Indeed, most current DNA storage systems do not use string lengths $n$ 
exceeding several thousands nucleotides (nts) due to the high cost of synthesis. 
On the other hand, current sequencing systems have read length $\ell$ between $100$ to $1500$ nts.

This paper adopts a different approach and looks for lower bounds 
for the number of profile vectors given moderate values of $q$, $\ell$, and $n$.
Surprisingly, for fixed $q\ge 2$ and moderately large values $n\le q^{\ell/2-1}$,
the number of profile vectors is at least $q^{\kappa n}$ with $\kappa$ very close to 1.
As an example, when $q=4$ (the number of DNA nucleotide bases) and $\ell=100$ (a practical read length),
our results show that there are at least $4^{0.99n}$ distinct  $100$-gram profile vectors for $1000\le n\le 10^6$.
In other words, for practical values of read and word lengths, 
we are able to obtain a set of distinct profile vectors with rates {\em close to one}.
In addition to enumeration results, we propose a set of linear-time encoding and decoding algorithms 
for each of two particular families of profile vectors.

\section{Preliminaries and Main Results}

Let $\llbracket q\rrbracket$ denote the set of integers $\{0,1,\ldots, q-1\}$ and 
$[i,j]$ denote the set of integers $\{i,i+1,\ldots, j\}$.
Consider a word $\vx=x_1x_2\cdots x_n$ of length $n$ over $\llbracket q\rrbracket$. 
For $1\le i<j\le n$, we denote  
the entry $x_i$ by $\vx[i]$, the {\em substring} $x_ix_{i+1}\cdots x_j$ of length $(j-i+1)$ by $\vx[i,j]$, 
and the length of $\vx$ by $|\vx|$.

For $\ell\le n$ and $1\le i\le n-\ell+1$, we also call the substring $\vx[i,i+\ell-1]$ an {\em $\ell$-gram} of $\vx$.
For $\vz\in\bbracket{q}^\ell$, let $p(\vx,\vz)$ denote the number of occurrences of $\vz$ as an $\ell$-gram of $\vx$.
Let $\vp(\vx,\ell)\triangleq \Big(p(\vx,\vz)\Big)_{\vz\in\bbracket{q}^\ell}$ be the ($\ell$-gram) {\em profile vector} of length $q^\ell$, indexed by all words of $\llbracket q\rrbracket^{\ell}$ ordered lexicographically. 
Let $\F(\vx,\ell)$ be the set of $\ell$-grams of $\vx$. 
In other words, $\F(\vx,\ell)$ is the support for the vector $\vp(\vx,\ell)$.

\begin{exa}\label{exa:simple} Let $q=2$, $n=5$ and $\ell=2$. 
Then $p(10001,01)=p(10001,10)=1$, while $p(10001,00)=2$.
So, $\vp(10001,2)=(2,1,1,0)$ and $\F(10001,2)=\{00,01,10\}$.

Consider the words $00010$ and $00101$. Then 
$\vp(10001,2)=\vp(00010,2)$ while $\F(10001,2)=\F(00010,2)=\F(00101,2)$.
\end{exa}

As illustrated by Example \ref{exa:simple}, different words may have the same profile vector.
We define a relation on $\bbracket{q}^n$ where $\vx\sim \vx'$ if and only if
$\vp(\vx,\ell)=\vp(\vx',\ell)$. 
It can be shown that $\sim$ is an equivalence relation and we denote the number of equivalence classes by $P_q(n,\ell)$.
We further define  the 
 {\em rate of profile vectors} to be $R_q(n,\ell)=\log_q P_q(n,\ell)/n$.

The asymptotic growth of $P_q(n,\ell)$ as a function of $n$ is given as below. 

\begin{thm}[{Jacquet \etal{} \cite{Jacquet.etal:2012}, Kiah \etal{}\cite{Kiah.etal:2015a}}]
Fix $q\ge 2$ and $\ell$. Then \[P_q(n,\ell)=\Theta\left(n^{q^\ell-q^{\ell-1}}\right).\]
Hence, $\lim_{n\to\infty} R_q(n,\ell)=0$.
\end{thm}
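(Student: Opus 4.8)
The plan is to establish the two asymptotic bounds $P_q(n,\ell)=O\left(n^{q^\ell-q^{\ell-1}}\right)$ and $P_q(n,\ell)=\Omega\left(n^{q^\ell-q^{\ell-1}}\right)$ separately, treating $q$ and $\ell$ as fixed constants so that all implied constants may depend on them. For the upper bound, I would argue that a profile vector $\vp(\vx,\ell)$ is determined by substantially fewer than $q^\ell$ free coordinates. The key observation is the Eulerian/de Bruijn consistency condition: for each $(\ell-1)$-gram $\vw\in\bbracket{q}^{\ell-1}$, the number of $\ell$-grams of $\vx$ beginning with $\vw$ must equal the number ending with $\vw$, up to an additive correction of magnitude at most $1$ coming from the first and last $(\ell-1)$-gram of $\vx$. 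This gives roughly $q^{\ell-1}$ independent linear constraints (one per $(\ell-1)$-gram, minus one dependency since the total counts agree), so the profile vector lives in a bounded-error neighbourhood of an integer lattice of rank $q^\ell-q^{\ell-1}$. Since every coordinate $p(\vx,\vz)$ is an integer in $[0,n-\ell+1]$, the number of lattice points in that box is $O\left(n^{q^\ell-q^{\ell-1}}\right)$, and accounting for the $O(1)$-many correction terms only multiplies this by a constant; hence $P_q(n,\ell)=O\left(n^{q^\ell-q^{\ell-1}}\right)$.

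For the lower bound, the strategy is to exhibit $\Omega\left(n^{q^\ell-q^{\ell-1}}\right)$ words with pairwise distinct profile vectors. The natural construction is to fix a path structure that traverses the de Bruijn graph $G_q(\ell-1)$ (vertices $\bbracket{q}^{\ell-1}$, edges $\bbracket{q}^{\ell}$) and then vary the multiplicities with which each edge is used. Concretely, one picks a fixed Eulerian-type closed walk covering all edges once to ``activate'' every $\ell$-gram, then inflates it: for a vector of nonnegative edge-multiplicities $\vm=(m_{\vz})_{\vz}$ satisfying the vertex-balance (flow-conservation) equations, there is a corresponding closed walk in the graph, and the Eulerian circuit construction (graph connected, in-degree equals out-degree at every vertex) produces an actual word $\vx$ of length $\sum_{\vz} m_{\vz}+(\ell-1)$ whose profile is exactly $\vm$. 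The solution set of the balance equations is an integer lattice of rank $q^\ell-q^{\ell-1}$ (number of edges minus number of vertices plus one, for the strongly connected graph $G_q(\ell-1)$), so by choosing each free multiplicity in a range of size $\Theta(n/q^\ell)=\Theta(n)$ we obtain $\Omega\left(n^{q^\ell-q^{\ell-1}}\right)$ valid multiplicity vectors with total length $\le n$; distinct multiplicity vectors give distinct profile vectors by construction. Padding each to exactly length $n$ in a profile-preserving way (e.g.\ by adjusting a designated loop count) finishes the count.

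The main obstacle is the lower-bound realizability argument: one must verify that every lattice point in the chosen box actually corresponds to a genuine length-$n$ word, which requires (i) the balance equations to have full-rank solution lattice of the claimed dimension $q^\ell-q^{\ell-1}$ over the de Bruijn graph, (ii) the connectivity needed for the Euler-circuit existence theorem to apply once all chosen multiplicities are strictly positive, and (iii) a clean way to normalize total length to $n$ without collapsing distinct profiles. Points (i) and (ii) are standard facts about de Bruijn graphs but need to be stated carefully; (iii) is where a little bookkeeping is needed, since one wants the final count to still be $\Theta\left(n^{q^\ell-q^{\ell-1}}\right)$ after fixing the length. The conclusion $\lim_{n\to\infty}R_q(n,\ell)=\lim_{n\to\infty}\log_q P_q(n,\ell)/n=0$ is then immediate, since $\log_q P_q(n,\ell)=(q^\ell-q^{\ell-1})\log_q n+O(1)=o(n)$.
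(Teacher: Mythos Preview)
The paper does not actually prove this theorem: it is quoted as a known result of Jacquet \etal{} and Kiah \etal{}, stated for context, and the paper immediately moves on to its own contribution (Theorem~\ref{thm:main}). So there is no ``paper's own proof'' to compare against here.

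That said, your sketch is essentially the argument carried out in those references: the upper bound via the de Bruijn flow-conservation constraints and a lattice-point count, and the lower bound via realizing Eulerian multigraphs on $G_q(\ell-1)$ as words and counting the free multiplicities. One small point to tighten: your dimension count is slightly slurred. The $q^{\ell-1}$ balance equations have exactly one linear dependency, so the flow lattice has rank $q^\ell-q^{\ell-1}+1$, not $q^\ell-q^{\ell-1}$; it is the additional constraint $\sum_{\vz}p(\vx,\vz)=n-\ell+1$ (fixed word length) that drops the free dimension to $q^\ell-q^{\ell-1}$. You use this length normalization implicitly in step~(iii) but it should appear in the upper bound as well. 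With that fix, your outline matches the cited proofs, and the final line $R_q(n,\ell)=(q^\ell-q^{\ell-1})\log_q n/n+O(1/n)\to 0$ is correct.
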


Our main contribution is the following set of exact values and lower bounds for $P_q(n,\ell)$ 
for finite values of $n$, $q$ and $\ell$.

\begin{thm}\label{thm:main}
Fix $q\ge 2$. Let $\mu$ be the {\Mobius} function.

\begin{enumerate}[(i)]
\item If $\ell\le n< 2\ell$, then
\begin{equation}\label{nsmall}
P_q(n,\ell)=q^n-\sum_{r \mid n-\ell+1}\sum_{t \mid r}\left(\frac{r-1}{r}\right)\mu\left(\frac rt\right) q^{t}>q^{n-1}(q-1).
\end{equation}

\item If $n=q^{a-1}\ell$ where $4\le 2a\le \ell$, then
\begin{equation}\label{addressing}
P_q(n,\ell)\ge (q-1)^{q^{a-1}(\ell-a)}.
\end{equation}

\item If $n\ge \ell$, 

\begin{equation}\label{debruijn}
P_q(n,\ell)> \frac1n \left(\sum_{t \mid n}\mu\left(\frac nt\right) q^{t}-\binom{n}{2}q^{n-\ell+1}\right).
\end{equation}
Suppose further that $\ell\ge 2\log_q n+2$ and $n\ge 8$. Then $P_q(n,\ell)>q^{n-1}/n$.
\end{enumerate}
\end{thm}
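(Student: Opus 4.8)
The plan is to count profile vectors by exhibiting a large family of words that are pairwise inequivalent under $\sim$, using the structure of de Bruijn-type sequences. The key observation is that if a word $\vx$ of length $n$ has the property that all of its $\ell$-grams are \emph{distinct}, then the profile vector $\vp(\vx,\ell)$ is a binary vector whose support $\F(\vx,\ell)$ consists of exactly $n-\ell+1$ words of length $\ell$. Two such words $\vx,\vx'$ with distinct $\ell$-grams can still satisfy $\vx\sim\vx'$, but only if $\F(\vx,\ell)=\F(\vx',\ell)$; moreover, when $\ell$ is large enough that the $(\ell-1)$-grams overlap uniquely, a word with all $\ell$-grams distinct is determined up to cyclic rotation by the \emph{set} of its $\ell$-grams (reading it off as an Eulerian-type path). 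So I would first reduce the count to: (number of cyclic classes of length-$n$ words all of whose $\ell$-grams are distinct), which is a necklace-type count, and this is where the $\frac1n\sum_{t\mid n}\mu(n/t)q^t$ term — the number of $q$-ary necklaces of length $n$, equivalently (a lower bound via) the count of length-$n$ strings with no nontrivial period — comes from.

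Next I would bound the number of length-$n$ words that fail the ``all $\ell$-grams distinct'' condition: a word fails iff there exist $1\le i<j\le n-\ell+1$ with $\vx[i,i+\ell-1]=\vx[j,j+\ell-1]$. For a fixed pair $(i,j)$, this is a linear constraint forcing roughly $\ell$ coordinates to be copies of others, so the number of such words is at most $q^{n-\ell}$ (or, being slightly loose to match the stated bound, $q^{n-\ell+1}$); there are $\binom n2$ choices of the pair $(i,j)$, giving a union bound of $\binom n2 q^{n-\ell+1}$ bad words. Subtracting, the number of \emph{good} words (all $\ell$-grams distinct) is at least $\sum_{t\mid n}\mu(n/t)q^t - \binom n2 q^{n-\ell+1}$ after one also discards words with a nontrivial period — here one has to be a little careful, since the Möbius sum $\sum_{t\mid n}\mu(n/t)q^t$ counts aperiodic (primitive) strings, and primitivity should be argued either to follow automatically from having distinct $\ell$-grams (a periodic word repeats $\ell$-grams once $n$ exceeds the period plus $\ell$) or to cost only a negligible correction. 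Dividing by $n$, the number of cyclic classes of good words, and hence $P_q(n,\ell)$, is at least $\frac1n\left(\sum_{t\mid n}\mu(n/t)q^t-\binom n2 q^{n-\ell+1}\right)$, which is \eqref{debruijn}.

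For the second assertion, I would simply plug in the hypotheses $\ell\ge 2\log_q n+2$ and $n\ge 8$ to show the right-hand side of \eqref{debruijn} exceeds $q^{n-1}/n$. The condition $\ell\ge 2\log_q n+2$ gives $q^{\ell-1}\ge n^2 q$, hence $\binom n2 q^{n-\ell+1}\le \tfrac12 n^2 q^{n-\ell+1}\le \tfrac12 q^{n}/q = \tfrac12 q^{n-1}$ (using $n^2 q\le q^{\ell-1}$). On the other side, $\sum_{t\mid n}\mu(n/t)q^t \ge q^n - \sum_{t\le n/2} q^t \ge q^n - 2q^{n/2}\ge q^n - \tfrac12 q^{n-1}$ once $n\ge 8$ (so that $2q^{n/2}\le \tfrac12 q^{n-1}$, i.e. $4q^{n/2+1}\le q^n$, which holds for $q\ge2$, $n\ge 8$). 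Combining, the quantity in parentheses is at least $q^n - \tfrac12 q^{n-1} - \tfrac12 q^{n-1} = q^n - q^{n-1}\ge q^{n-1}$ — in fact strictly larger — so after dividing by $n$ we get $P_q(n,\ell)>q^{n-1}/n$.

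The main obstacle I anticipate is the reduction step: carefully justifying that a word all of whose $\ell$-grams are distinct is determined up to cyclic rotation by the multiset (here, set) of those $\ell$-grams, i.e. that the de Bruijn graph on these $\ell$-grams has a \emph{unique} (cyclic) Eulerian path corresponding to $\vx$, so that the quotient map from such words to necklaces has fibers of size exactly $n$ (or at most $n$, which suffices for the lower bound). The union bound on bad words and the final arithmetic are routine; the combinatorial bookkeeping relating primitive strings, necklaces, and the distinct-$\ell$-gram condition is the delicate part.
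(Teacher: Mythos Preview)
Your arithmetic for the second assertion is essentially the paper's and is fine. The problem is the reduction step, and it is not a mere technicality: the claim that two words with the same set of (distinct) $\ell$-grams must be cyclic string rotations of one another is \emph{false}. For instance, with $q=2$, $\ell=2$, $n=5$, the words $00110$ and $11001$ are both aperiodic, both have $2$-gram set $\{00,01,10,11\}$, yet lie in different necklaces. More generally, a $0/1$ profile vector corresponds to an edge set in the de Bruijn graph on $(\ell-1)$-grams, and the words realising that profile are exactly the Eulerian walks through that edge set; by the BEST theorem there can be exponentially many such walks, not merely $n$, so dividing by $n$ is unjustified. (Note also that string rotation does not even preserve the $\ell$-gram set of a \emph{linear} word, so necklaces are not the right quotient in the first place.)

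The paper repairs this by strengthening the hypothesis: instead of requiring all $\ell$-grams of $\vx$ to be distinct, require all $(\ell-1)$-grams to be distinct (i.e.\ $\vx$ is a partial $(\ell-1)$-de Bruijn sequence). A lemma of Ukkonen then shows that $\vp(\vx,\ell)$ determines $\vx$ \emph{exactly}---not merely up to rotation---because each $(\ell-1)$-gram has a unique successor among the $\ell$-grams, so the word can be read off unambiguously from its profile. Hence $P_q(n,\ell)$ is at least the number of partial $(\ell-1)$-de Bruijn sequences of length $n$, and a result of Maurer bounds this below by $L_q(n)-\binom{n}{2}q^{n-\ell+1}/n$, which is exactly \eqref{debruijn}. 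Your union-bound computation is essentially Maurer's argument (and the exponent $n-\ell+1$ rather than your $n-\ell$ now arises naturally, since a repeated $(\ell-1)$-gram imposes $\ell-1$ constraints); the factor $1/n$ enters because Maurer counts via Lyndon words, not via any rotation quotient of profile classes.
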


We prove Equations \eqref{nsmall}, \eqref{addressing}, and \eqref{debruijn} 
in Sections \ref{sec:nsmall}, \ref{sec:addressing}, and \ref{sec:debruijn},  respectively.

\begin{figure*}[!t]
\begin{tabular}{ll}
(a) The rates $R_4(n,20)$ for $20\le n\le 10^8$ &
(b) The rates $R_4(n,100)$ for $100\le n\le 10^8$\\

\includegraphics[height=65mm]{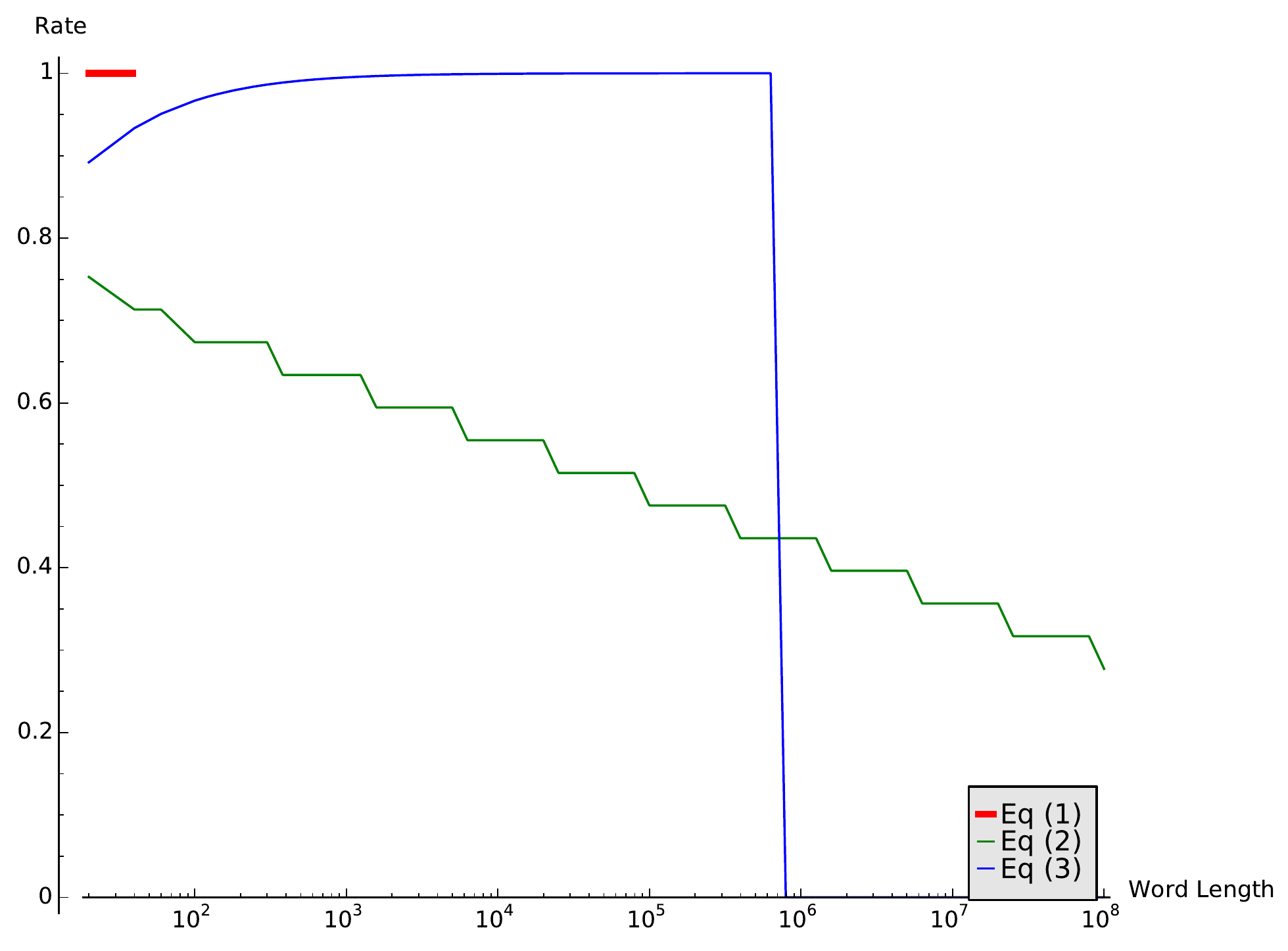}
&
\includegraphics[height=65mm]{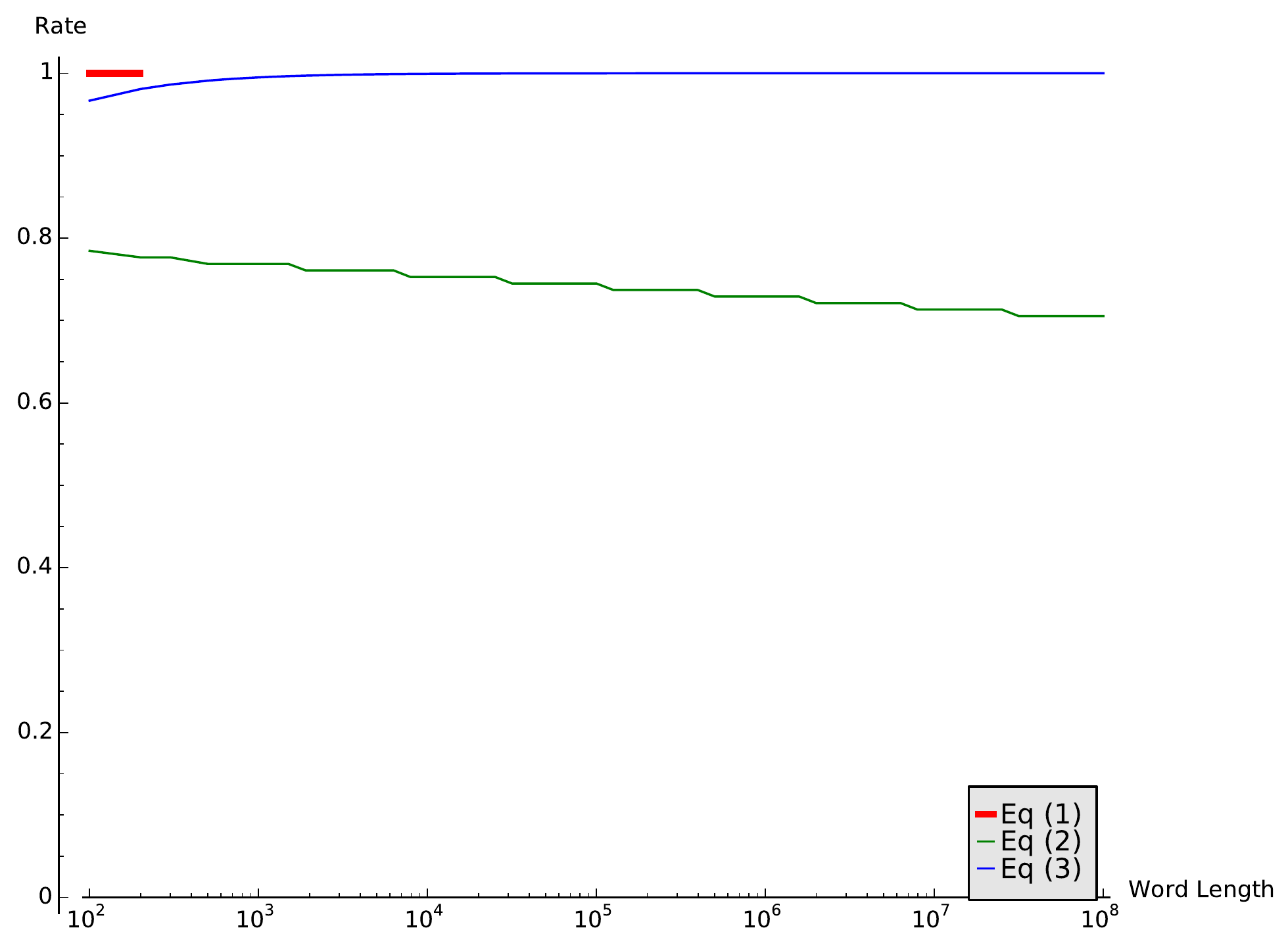}
\end{tabular}
\caption{Rate of profile vectors for fixed values of $\ell$.}
\label{fig:ellfixed}
\end{figure*}

\begin{figure*}[!t]
\begin{tabular}{ll}
(a) The rates $R_4(1000,\ell)$ for $1\le \ell\le 1000$ &
(b) The rates $R_4(10^6,\ell)$ for $1\le \ell \le 10^6$ \\
\includegraphics[height=65mm]{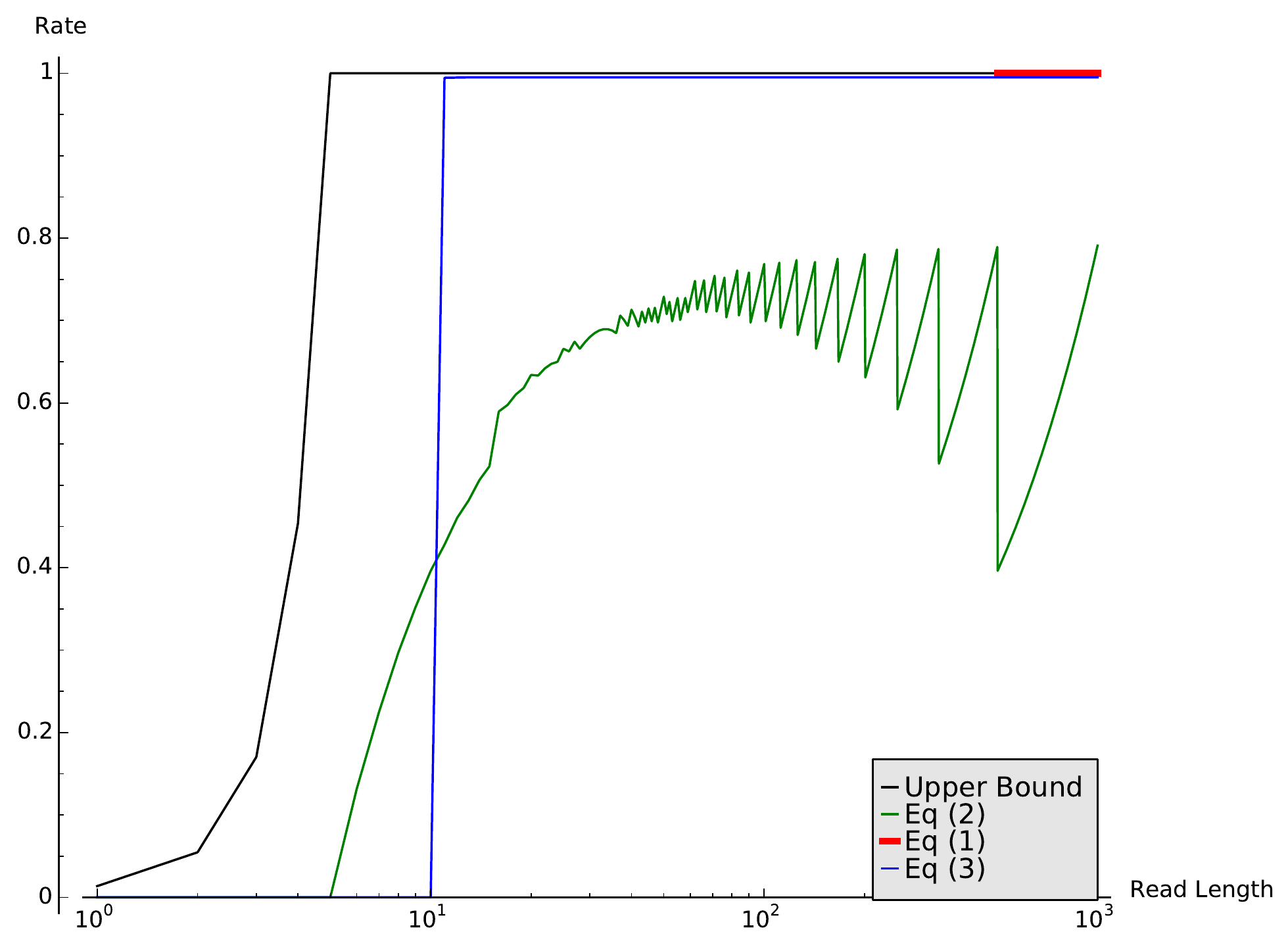}
& \includegraphics[height=65mm]{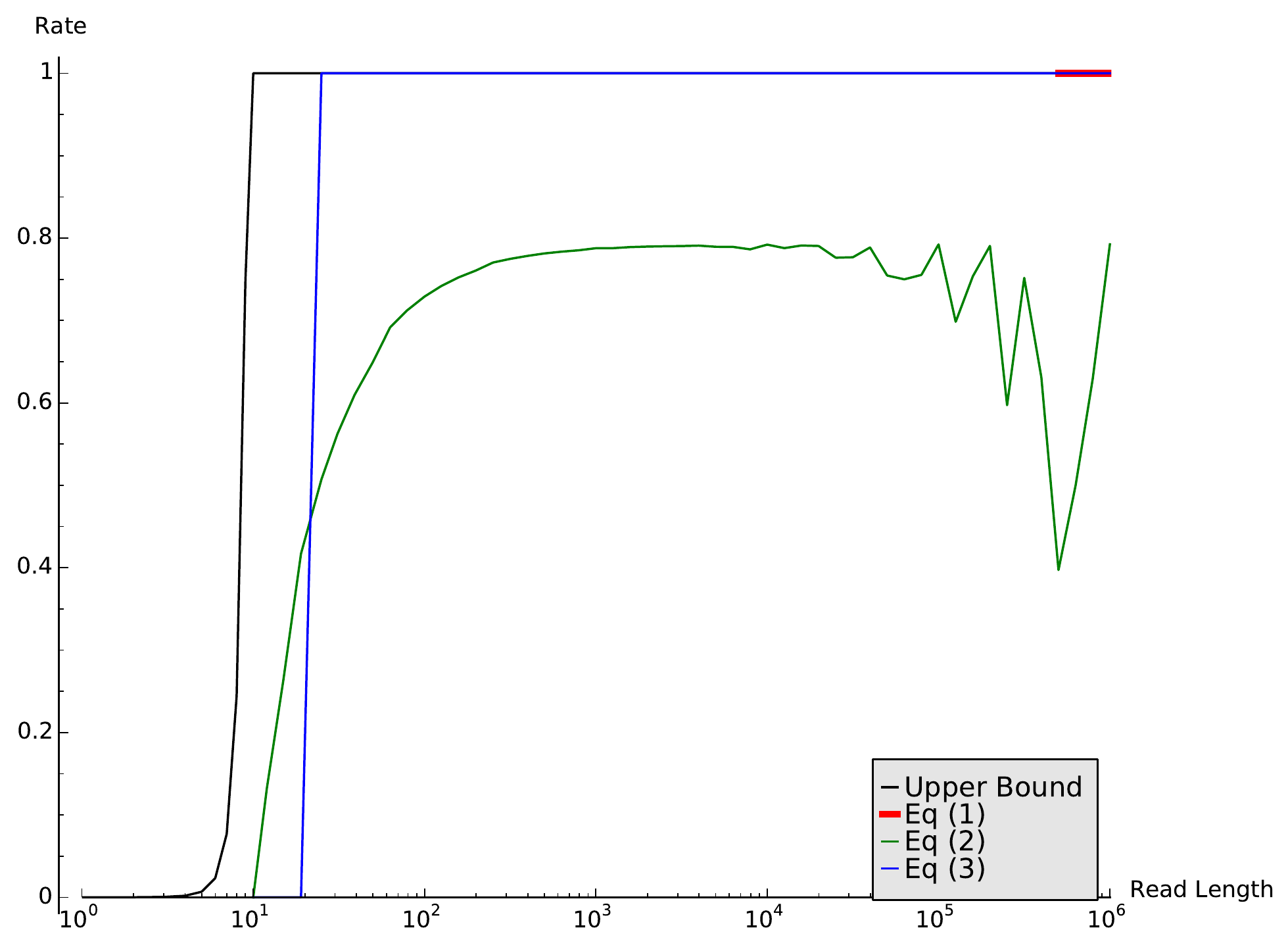}
\end{tabular}
\caption{Rate of profile vectors for fixed values of $n$.}
\label{fig:nfixed}
\end{figure*}

\begin{rem}
We compare the rates provided in Theorem \ref{thm:main} with $q=4$ in Figures \ref{fig:ellfixed} and \ref{fig:nfixed}.
\begin{enumerate}[(i)]
\item Figure \ref{fig:ellfixed}(b) shows that for a practical read length $\ell=100$ and word lengths $n\le 10^8$,  
the rates of the profile vectors is very close to one. 
In fact, computations show that $R_4(n,\ell)\ge 0.99$ for $1000\le n\le 10^6$.
Even for a shorter read length $\ell=20$, Figure \ref{fig:ellfixed}(a) illustrates that the rates are close to one for word lengths $n\le 10^5$. 
Therefore, for practical values of read and word lengths, 
we obtain a set of distinct profile vectors with rates close to one.

\item In Figure \ref{fig:nfixed}, we plot an upper bound for $R_q(n,\ell)$, given by
\begin{equation}\label{upper}
R_q(n,\ell)\le \frac1n \log_q \binom{n-\ell+q^\ell}{q^\ell-1}.
\end{equation}
Here, the inequality follows from the fact that a profile vector is an integer-valued vector of length $q^\ell$, 
whose entries sum to $n-\ell+1$. 
Such vectors are also known as {\em weak $q^\ell$-compositions of $(n-\ell+1)$} and 
are given by $\binom{n-\ell+q^\ell}{q^\ell-1}$ (see for example Stanley \cite[Section 1.2]{Stanley:2011}),
and hence, the inequality follows.

Figure \ref{fig:nfixed} illustrates that if we fix the word length $n$, 
we have $R_q(n,\ell)\approx 1$ for $\ell\ge 2\log_q n$ and $R_q(n,\ell)\approx 0$ for $\ell\le \log_q n$.
Therefore, it remains open to determine $R_q(n,\ell)$ for $\log_q n\le \ell\le 2\log_q n$.
We state this observation formally in Corollary \ref{cor:rates} and  provide an asymptotic analysis for the rates of profile vectors.

\item 
Figures \ref{fig:ellfixed} and \ref{fig:nfixed} shows that, for $n\le q^{\ell/2 -1}$ or $\ell\ge 2\log_q n+2$, Equation \eqref{debruijn} provides 
a significantly better lower bound than Equation \eqref{addressing}. 
However, we are only able to demonstrate a set of efficient encoding and decoding algorithms 
for the two families of profile vectors associated with Equations \eqref{nsmall} and \eqref{addressing}.

Furthermore, we provide an efficient sequence reconstruction algorithm for the family of profile vectors associated with Equation \eqref{addressing}.

\end{enumerate}
\end{rem}

Let $n$ be a function of $\ell$, or $n=n(\ell)$ such that $n(\ell)$ increases with $\ell$. 
We then define the 
{\em asymptotic rate of profile vectors with respect to $n$} via the equation
\begin{equation}\label{def:cap}
\capty(n,q)\triangleq \limsup_{\ell\to\infty} R_q(n,\ell). 
\end{equation}
Suppose that $\ell$ is a system parameter determined by current sequencing technology.
Then $n=n(\ell)$ determines how long we can set our codewords so that 
the information rate of the DNA storage channel remains as $\capty(n,q)$.
From Theorem \ref{thm:main}, we derive the following result on the asymptotic rates.
The detailed proof is given in Section \ref{sec:debruijn}.

\begin{cor}[Asymptotic rates] \label{cor:rates} Fix $q\ge 2$. 
\begin{enumerate}
\item Suppose that $\ell\le n(\ell) \le q^{\ell/2-1} $ for all $\ell$.
Then $\alpha(n,q)=1$.
\item Let $\epsilon>0$. Suppose that $n(\ell) \ge q^{(1+\epsilon)\ell} $ for all $\ell$.
Then $\alpha(n,q)=0$.
\end{enumerate}
\end{cor}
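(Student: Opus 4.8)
The plan is to read off both statements directly from the bounds already proved in Theorem~\ref{thm:main}(iii) and from the upper bound~\eqref{upper}, keeping in mind that in both regimes $n=n(\ell)\to\infty$ as $\ell\to\infty$, and that the trivial bounds $0\le P_q(n,\ell)\le q^n$ give $0\le R_q(n,\ell)\le 1$ for all $n,\ell$.

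For part~(1) it therefore suffices to bound $R_q(n,\ell)$ from below. First I would rewrite the hypothesis $n(\ell)\le q^{\ell/2-1}$ as $\ell\ge 2\log_q n(\ell)+2$, and note that $n(\ell)\ge\ell$ forces $n(\ell)\ge 8$ once $\ell\ge 8$; hence the second assertion of Theorem~\ref{thm:main}(iii) applies for all large $\ell$ and yields $P_q(n,\ell)>q^{n-1}/n$. Taking $\log_q$ and dividing by $n$,
\[
R_q(n,\ell)>\frac{n-1-\log_q n}{n}=1-\frac{1+\log_q n}{n},
\]
and since $n=n(\ell)\to\infty$ the right-hand side tends to $1$. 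Combined with $R_q(n,\ell)\le 1$, this gives $\alpha(n,q)=\limsup_{\ell\to\infty}R_q(n,\ell)=1$.

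For part~(2) I would start from~\eqref{upper}, namely $R_q(n,\ell)\le \tfrac1n\log_q\binom{n-\ell+q^\ell}{q^\ell-1}$. The hypothesis $n(\ell)\ge q^{(1+\epsilon)\ell}$ gives $q^\ell\le n$ for all large $\ell$, so $n-\ell+q^\ell\le 2n$; applying the standard estimate $\binom{m}{k}\le(em/k)^k$ with $m=n-\ell+q^\ell$ and $k=q^\ell-1$ then gives $\log_q\binom{n-\ell+q^\ell}{q^\ell-1}\le q^\ell\log_q(2en)$, whence
\[
R_q(n,\ell)\le\frac{q^\ell\log_q(2en)}{n}.
\]
The same hypothesis also yields $q^\ell\le n^{1/(1+\epsilon)}$, so $q^\ell/n\le n^{-\epsilon/(1+\epsilon)}$ and the bound becomes $R_q(n,\ell)\le n^{-\epsilon/(1+\epsilon)}\log_q(2en)$, which tends to $0$ as $n=n(\ell)\to\infty$; with $R_q(n,\ell)\ge 0$ this forces $\alpha(n,q)=0$.

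Neither part poses a real obstacle. The only points needing care are checking that the inequality hypotheses of Theorem~\ref{thm:main}(iii) are eventually met in part~(1), and, in part~(2), choosing the binomial estimate so that the polynomial gap $q^\ell\le n^{1/(1+\epsilon)}$ comfortably absorbs the $\log n$ factor; one could alternatively use the cruder bound $\binom{n-\ell+q^\ell}{q^\ell-1}\le (n-\ell+q^\ell+1)^{q^\ell-1}$ for an entirely elementary version of the same computation.
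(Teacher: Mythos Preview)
Your proposal is correct and follows essentially the same route as the paper. Part~(1) is identical to the paper's argument; in part~(2) the paper uses the slightly cruder estimate $\binom{n-\ell+q^\ell}{q^\ell-1}=\prod_{j=1}^{q^\ell-1}\frac{n-\ell+j+1}{j}\le n^{q^\ell-1}$ (valid for $\ell\ge 2$) in place of your $(em/k)^k$ bound, arriving at $R_q(n,\ell)\le \log_q n/n^{\epsilon/(1+\epsilon)}$, which is exactly the ``entirely elementary version'' you mention at the end.
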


%
%





\section{Exact Enumeration of Profile Vectors}\label{sec:nsmall}

We extend the methods of Tan and Shallit \cite{tan2013sets}, where the number of possible $\F(\vx,\ell)$ 
was determined for $\ell\le n<2\ell$. Specifically, we compute $P_q(n,\ell)$ for $\ell\le n<2\ell$.
Our strategy is to first define an equivalence relation using the notions of root conjugates
so that the number of equivalence classes yields $P_q(n,\ell)$. 
We then compute this number using standard combinatorial methods.

\begin{defn}
Let $\vx$ be a $q$-ary word. 
A \textit{period} of $\vx$ is a positive integer $r$ such that $\vx$ can be {\em factorized} as
\[
\vx = \underbrace{\vu\,\vu\,\cdots\,\vu}_{k \text{ times}}\vu'\text{, with } |\vu|=r , \vu' 
\text{ a prefix of }\vu \text{, and } k\geq 1.
\]
Let $\pi(\vx)$ denote the {\em minimum period} of $\vx$. 
The \textit{root} of $\vx$ is given by $\vh(\vx)=\vx[1,\pi(\vx)]$, which is the prefix of $\vx$ with length $\pi(\vx).$ 
Two words $\vx$ and $\vx'$ are said to be  \textit{root-conjugate} if $\vh(\vx)=\vu\vv$ and $\vh(\vx')=\vv\vu$ 
for some words $\vu$ and $\vv$, or $\vh(\vx)$ is a {\em rotation} of $\vh(\vx')$.
\end{defn}

\begin{exa} 10010010 has minimum period three and its root is 100. 
Also, 01001001 has minimum period three and its root is 010.
Therefore, 10010010 and 01001001 are root-conjugates. 
\end{exa}

Observe that two words that are root-conjugates necessarily have the same minimum period
and it can be shown that being root-conjugates form an equivalence relation.
In addition, we have the following technical lemma. 
\begin{lem}\label{lem:root}
Let $\vx$ be a word of length $n$ with $\pi(\vx)\le n-\ell+1\le \ell$.
Then, for $1\le i<j\le \pi(\vx)$, we have $\vx[i,i+\ell-1]\ne \vx[j,j+\ell-1]$.
\end{lem}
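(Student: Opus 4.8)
My plan is to argue by contradiction and reduce everything to a single observation: if the $\ell$-grams of $\vx$ at two positions inside the root coincide, then $\vx$ would inherit a period strictly shorter than $\pi(\vx)$, which is impossible.

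So I would set $r=\pi(\vx)$ and suppose, for a contradiction, that $\vx[i,i+\ell-1]=\vx[j,j+\ell-1]$ for some $1\le i<j\le r$; put $d=j-i$, so $1\le d\le r-1$. Reading off this equality coordinate by coordinate gives $\vx[m]=\vx[m+d]$ for every $m$ in the window $[i,i+\ell-1]$ — the symbol at each such position agrees with the symbol $d$ places to its right.

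The heart of the argument is to propagate this shift-by-$d$ relation from the window of length $\ell$ to all of $\vx$, using that $\vx$ already has period $r$. Because $r=\pi(\vx)\le n-\ell+1\le \ell$, the window $[i,i+\ell-1]$ has length $\ell\ge r$, hence meets every residue class modulo $r$. Given any position $m'$ with $1\le m'\le n-d$, I would choose $m$ in the window with $m\equiv m'\pmod r$; the hypotheses $i\le r-1\le n-\ell$ and $j=i+d\le r\le n-\ell+1$ ensure that $m$, $m+d$, $m'$ and $m'+d$ all lie in $[1,n]$. Since $r$ is a period of $\vx$, we get $\vx[m']=\vx[m]$ and $\vx[m'+d]=\vx[m+d]$, while $\vx[m]=\vx[m+d]$ because $m$ is in the window; chaining these gives $\vx[m']=\vx[m'+d]$. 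As $m'$ was arbitrary, $d$ is a period of $\vx$.

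Finally, $d<r=\pi(\vx)$ contradicts the minimality of $\pi(\vx)$, which finishes the proof. I expect the only delicate point to be the index bookkeeping in the propagation step: checking that the chosen representative $m$ and the shifted indices stay within $[1,n]$. This is precisely where the two hypotheses are consumed — $\pi(\vx)\le n-\ell+1$ keeps $j+\ell-1$ (hence $m+d$) in range, and $n-\ell+1\le\ell$ forces $r\le\ell$ so that the window covers all residues modulo $r$. No Fine--Wilf-type result is needed.
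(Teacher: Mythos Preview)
Your proof is correct and shares the paper's overall strategy: assume two $\ell$-grams within the root coincide, extract the local shift relation $\vx[m]=\vx[m+d]$ on a window of length $\ell\ge r$, and deduce a period smaller than $\pi(\vx)$. The one genuine difference is in how the contradiction is closed. You propagate the shift-by-$d$ relation from the window to \emph{every} position of $\vx$ via the known period $r$, so that $d=j-i$ itself becomes a global period. The paper instead transfers the relation only to the root (the step ``$\vx[s]=\vx[s+k]$ for $1\le s\le\pi(\vx)$'' already uses periodicity, though the paper leaves this implicit) and then invokes a gcd argument on the root to obtain the smaller period $\gcd(k,\pi(\vx))$ --- precisely the Fine--Wilf-flavoured step you chose to avoid. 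Your route trades that algebraic shortcut for a bit more explicit index bookkeeping, which you handle correctly; the paper's version is terser but glosses over the same propagation you spell out.
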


\begin{proof}
Suppose that $\vx[i,i+\ell-1]=\vx[j,j+\ell-1]$. Setting $k=j-i$, we have $\vx[s] = \vx[s+k]\,$ for $i \leq s \leq i + \ell -1.$ 
Since $\pi(\vx) \leq n-\ell +1 \leq \ell$, then $\vx[s] = \vx[s+k]$ for $1\leq s \leq \pi(\vx).$ 
Therefore, $\vx[s] = \vx[s+d]$ for $1 \leq s \leq \pi(\vx)$ where $d = \gcd(k,\pi(\vx))\le k=j-i<\pi(\vx)$.
In other words, $\vx$ has a period $d< \pi(\vx)$, 
contradicting the minimality of $\pi(\vx)$.
\end{proof}

Tan and Shallit proved the following result that characterized $\F(\vx,\ell)$ when $|\vx|<2\ell$.

\begin{lem}[{Tan and Shallit \cite[Theorem 15]{tan2013sets}}]\label{lem:ts}
Suppose that $\ell \leq n < 2 \ell$ and $\vx$ and $\vx'$ are distinct $q$-ary words of length $n$.
Then $\F(\vx,\ell) = \F(\vx',\ell)$ if and only if 
$\vx,\vx'$ are root-conjugates with $\pi(\vx)\leq n-\ell +1$.
\end{lem}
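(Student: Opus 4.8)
The plan is to prove the two implications separately. Write $m:=n-\ell+1$ for the number of $\ell$-gram windows of a length-$n$ word, so that $\ell\le n<2\ell$ is exactly $1\le m\le\ell$, and set $g_i:=\vx[i,i+\ell-1]$ and $g_i':=\vx'[i,i+\ell-1]$. For the ``if'' direction, assume $\vx,\vx'$ are root-conjugates, so they share a minimal period $\pi:=\pi(\vx)=\pi(\vx')\le m$; since $\pi\le m\le\ell\le n$, each word is the length-$n$ prefix of the purely periodic word generated by its root, i.e.\ $\vx[s]=\vh(\vx)\bigl[((s-1)\bmod\pi)+1\bigr]$, and likewise for $\vx'$. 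I would then prove that
\[
\F(\vx,\ell)=\bigl\{\ \text{the length-}\ell\text{ prefix of }\vu^{\infty}\ :\ \vu\text{ a cyclic rotation of }\vh(\vx)\ \bigr\}.
\]
Indeed, the windows of $\vx$ sit at positions $1,\dots,m$; $\pi$-periodicity makes $g_{i+\pi}=g_i$ whenever $i\le m-\pi$, so (using $m\ge\pi$) already $g_1,\dots,g_\pi$ exhaust $\F(\vx,\ell)$, and Lemma~\ref{lem:root} shows these $\pi$ windows are pairwise distinct; moreover $g_i$ is the length-$\ell$ prefix of $\vu^{\infty}$ for $\vu$ the $(i-1)$-fold cyclic rotation of $\vh(\vx)$. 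The displayed set depends only on the conjugacy class of $\vh(\vx)$ under rotation, so since $\vh(\vx)$ and $\vh(\vx')$ are rotations of one another we get $\F(\vx,\ell)=\F(\vx',\ell)$.

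For the ``only if'' direction, assume $\F(\vx,\ell)=\F(\vx',\ell)=:S$ with $\vx\ne\vx'$. The main work reduces to showing that $\vx$ (and hence, by symmetry, $\vx'$) has some period $\le m$, i.e.\ $\pi(\vx)\le m$: granting this, the analysis of the ``if'' direction applies to both words and yields $|S|=\pi(\vx)=\pi(\vx')$ together with the equality of their two descriptions of $S$ as above; since $\pi(\vx)=\pi(\vx')\le\ell$, the length-$\ell$ prefix of $\vu^{\infty}$ determines $\vu$, so the two rotation-classes coincide, whence $\vh(\vx)$ is a cyclic rotation of $\vh(\vx')$ --- i.e.\ $\vx,\vx'$ are root-conjugates with $\pi(\vx)\le m$, as wanted.

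It remains to manufacture a period $\le m$ from $\vx\ne\vx'$, and this is the step I expect to be the main obstacle. For each $i$ choose $\sigma(i)$ with $g_i'=g_{\sigma(i)}$. If $\sigma(i+1)=\sigma(i)+1$ held for all $i\in[1,m-1]$, then $\sigma$ would be a step-$1$ arithmetic progression, and the requirement $\sigma(m)\le m$ would force $\sigma(1)=1$, hence $\sigma=\mathrm{id}$ and $\vx=\vx'$, a contradiction; so $\sigma(i+1)\ne\sigma(i)+1$ for some $i$. Pulling the overlap $g_i'[2,\ell]=g_{i+1}'[1,\ell-1]$ of two consecutive $\vx'$-windows back into $\vx$ then exhibits one $(\ell-1)$-gram of $\vx$ occurring at two distinct positions $s<t$ with $t-s\le m$, so the block $\vx[s,t+\ell-2]$, of length $\ge\ell$, has period $t-s\le m$. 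The hard part is upgrading this local periodicity to a genuine period of $\vx$: here I would use that, because $n<2\ell$, every two $\ell$-grams of $\vx$ overlap --- in at least $2\ell-n\ge1$ positions --- and combine this with a Fine--Wilf-type argument to propagate the period $t-s$ across the residual prefix $\vx[1,s-1]$ and suffix $\vx[t+\ell-1,n]$; a final use of Lemma~\ref{lem:root} controls minimality and delivers $\pi(\vx)\le m$. This propagation is, in essence, the combinatorial core of \cite[Theorem~15]{tan2013sets}.
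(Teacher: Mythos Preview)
The paper does not supply a proof of this lemma: it is quoted verbatim as \cite[Theorem~15]{tan2013sets} and then used as a black box in the proof of Theorem~\ref{thm:lyndon}. So there is no ``paper's own proof'' to compare your argument against; what follows is an assessment of your sketch on its own merits.

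Your ``if'' direction is clean and correct. The description
\[
\F(\vx,\ell)=\bigl\{\text{length-}\ell\text{ prefix of }\vu^{\infty}:\vu\text{ a rotation of }\vh(\vx)\bigr\}
\]
is exactly right once $\pi(\vx)\le m\le\ell$: periodicity gives $g_{i+\pi}=g_i$, so $g_1,\dots,g_\pi$ already exhaust $\F(\vx,\ell)$, and Lemma~\ref{lem:root} certifies that these $\pi$ windows are distinct. Since this set depends only on the rotation class of the root, root-conjugacy forces $\F(\vx,\ell)=\F(\vx',\ell)$.

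The ``only if'' direction, however, is not a proof but a plan, and the plan has a real gap at precisely the point you flag. From one index $i$ with $\sigma(i+1)\ne\sigma(i)+1$ you obtain a repeated $(\ell-1)$-gram of $\vx$, hence a block of $\vx$ of length $\ge\ell$ carrying a \emph{local} period $p:=t-s\le m$. But a single local period of this kind does not, by itself, propagate to a global period of $\vx$ via any routine Fine--Wilf argument: Fine--Wilf needs two periods on a common block, and you have produced only one. (Concretely, nothing yet rules out that the uncovered prefix $\vx[1,s-1]$ or suffix $\vx[t+\ell-1,n]$ breaks the period; the overlap bound $2\ell-n\ge1$ between $\ell$-grams is too weak to force this on its own.) To close the gap one must exploit the \emph{full} equality $\F(\vx,\ell)=\F(\vx',\ell)$---for instance by also matching the extremal windows $g_1,g_m$ of $\vx$ inside $\vx'$ and iterating the overlap argument---rather than a single consecutive pair of windows of $\vx'$. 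That is indeed the combinatorial core of Tan--Shallit's Theorem~15, and your sketch correctly identifies it as such but does not carry it out.
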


%
%

Using Lemma \ref{lem:root}, we extend Lemma \ref{lem:ts} to characterize the profile vectors when $n<2\ell$.

\begin{thm}\label{thm:lyndon} Let $\vx$ and $\vx'$ be distinct $q$-ary words of length $n$.
If $\vx,\vx'$ are root-conjugates with  $\pi(\vx) \mid n-\ell +1$, then $\vp(\vx,\ell)=\vp(\vx',\ell)$.
Conversely, if $\ell \leq n < 2 \ell$ and $\vp(\vx,\ell)=\vp(\vx',\ell)$, then
$\vx,\vx'$ are root-conjugates with  $\pi(\vx) \mid n-\ell +1$.
\end{thm}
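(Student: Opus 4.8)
The plan is to deduce Theorem \ref{thm:lyndon} from Lemma \ref{lem:ts}, upgrading the statement about supports $\F(\vx,\ell)$ to one about full profile vectors $\vp(\vx,\ell)$, and refining the condition $\pi(\vx)\le n-\ell+1$ to $\pi(\vx)\mid n-\ell+1$. For the forward direction, suppose $\vx,\vx'$ are root-conjugates with $\pi(\vx)\mid n-\ell+1$; write $r=\pi(\vx)=\pi(\vx')$. First I would argue that since $r\mid n-\ell+1$ and $r$ is the minimum period, the word $\vx$ is completely determined by its root $\vh(\vx)$ and its length $n$: namely $\vx$ is the length-$n$ prefix of $\vh(\vx)^\infty$. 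Next, I would count occurrences of an arbitrary $\ell$-gram $\vz$ in $\vx$. Because every position $i\in[1,n-\ell+1]$ satisfies $\vx[i,i+\ell-1]=\vx[i',i'+\ell-1]$ whenever $i\equiv i'\pmod r$ (the length-$\ell$ window only sees the periodic structure, and $\ell\le n$ so $r\le n-\ell+1\le\ell$ lets Lemma \ref{lem:root} apply within one period), the $n-\ell+1$ starting positions split into exactly $r$ residue classes mod $r$, each of size $(n-\ell+1)/r$. Hence $p(\vx,\vz)=\frac{n-\ell+1}{r}\cdot\bigl|\{i\in[1,r]: \vh(\vx)^\infty[i,i+\ell-1]=\vz\}\bigr|$, and this last count depends only on the cyclic word determined by $\vh(\vx)$, not on which rotation we picked. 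Since $\vh(\vx')$ is a rotation of $\vh(\vx)$, we get $\vp(\vx,\ell)=\vp(\vx',\ell)$. (Note this direction needs no hypothesis $n<2\ell$.)

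For the converse, assume $\ell\le n<2\ell$ and $\vp(\vx,\ell)=\vp(\vx',\ell)$. Since equal profile vectors have equal support, $\F(\vx,\ell)=\F(\vx',\ell)$, so Lemma \ref{lem:ts} immediately gives that $\vx,\vx'$ are root-conjugates with $\pi(\vx)\le n-\ell+1$. It remains to promote $\le$ to $\mid$. This is the step I expect to be the main obstacle, and the tool should be Lemma \ref{lem:root}: set $r=\pi(\vx)$, so $r\le n-\ell+1\le\ell$, and Lemma \ref{lem:root} says the $r$ grams $\vx[1,\ell],\dots,\vx[r,r+\ell-1]$ are pairwise distinct. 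I would compare the total gram-count $\sum_{\vz}p(\vx,\vz)=n-\ell+1$ with how the $n-\ell+1$ starting positions distribute among residue classes mod $r$: positions $i$ and $i+r$ give the same gram (by periodicity, since a window of length $\ell\ge r$ shifted by exactly one period is unchanged), so the multiset of grams is a union of the $r$ distinct grams $\vx[1,\ell],\dots,\vx[r,r+\ell-1]$ with multiplicities $m_1,\dots,m_r$ where $m_i=|\{j\in[1,n-\ell+1]: j\equiv i\pmod r\}|$, i.e. each $m_i\in\{\lfloor (n-\ell+1)/r\rfloor,\lceil(n-\ell+1)/r\rceil\}$. The analogous statement holds for $\vx'$ with the same $r$ and the same multiset of grams (as $\vp(\vx',\ell)=\vp(\vx,\ell)$). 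Since the $\vx[i,i+\ell-1]$ are distinct, the multiplicity vector $(m_1,\dots,m_r)$ is read off uniquely from the profile vector; if $r\nmid n-\ell+1$ then the $m_i$ take two distinct values $\lfloor\cdot\rfloor$ and $\lceil\cdot\rceil=\lfloor\cdot\rfloor+1$, and the positions where the larger value occurs are forced to be a consecutive block of residues — and I would derive a contradiction by showing $\vx'$, being a root-conjugate, cyclically shifts this block, producing a different multiplicity pattern (hence a different profile vector) unless the block is empty or all of $[1,r]$, i.e. unless $r\mid n-\ell+1$.

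Alternatively — and this may be cleaner than the block-chasing argument above — I would argue the converse contrapositively: if $\pi(\vx)=r$ with $r\nmid n-\ell+1$, then among the distinct grams $\vx[1,\ell],\dots,\vx[r,r+\ell-1]$ the first $s\triangleq (n-\ell+1)\bmod r$ of them (namely those with $i\in[1,s]$) occur $\lceil(n-\ell+1)/r\rceil$ times and the rest occur one fewer time. For a nontrivial rotation taking root $\vu\vv\mapsto\vv\vu$ with $|\vu|=c\not\equiv 0\pmod r$, the corresponding grams of $\vx'$ are a cyclic shift by $c$ of those of $\vx$, so the set of grams with the higher multiplicity becomes $\{\vx[i,i+\ell-1]: i\in[1,s]+c \bmod r\}$; since $0<s<r$ this is a genuinely different subset of the (distinct) grams, so some gram has strictly different multiplicity in $\vx$ and $\vx'$, contradicting $\vp(\vx,\ell)=\vp(\vx',\ell)$. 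Either way, the crux is Lemma \ref{lem:root} guaranteeing the within-one-period grams are all distinct, so that multiplicities can be tracked gram-by-gram without collisions; the bookkeeping of residue blocks under rotation is routine once that is in hand.
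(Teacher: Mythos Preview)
Your proposal is correct and follows the paper's overall strategy: invoke Lemma~\ref{lem:ts} to obtain root-conjugacy with $\pi(\vx)\le n-\ell+1$, then use Lemma~\ref{lem:root} (distinctness of the $r$ grams within one period) together with a multiplicity count to upgrade $\le$ to $\mid$. The forward direction is the same periodicity argument the paper sketches.

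Where you differ from the paper is in the converse's endgame. The paper writes $n-\ell+1=rs+t$, takes the roots as $\vu\vv$ and $\vv\vu$, and does a two-case split on whether $t<|\vu|$ or $|\vu|\le t\le r-1$; in each case it exhibits one explicit $\ell$-gram whose multiplicity is $s+1$ in one word and $s$ in the other. Your argument avoids the case split by observing that the set of ``high-multiplicity'' residues $\{1,\dots,t\}\subset\ZZ/r\ZZ$ is a proper nonempty arc, that passing from $\vx$ to $\vx'$ cyclically shifts this arc by $c=|\vu|\not\equiv 0\pmod r$, and that no proper nonempty arc of $\ZZ/r\ZZ$ is invariant under a nontrivial rotation. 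This is a cleaner, case-free packaging of the same comparison; the paper's explicit witnesses are essentially the endpoints of your arc.

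One small cosmetic point: in your forward paragraph you invoke $r\le n-\ell+1\le\ell$ to cite Lemma~\ref{lem:root}, and then remark that this direction needs no hypothesis $n<2\ell$. The second inequality $n-\ell+1\le\ell$ \emph{is} the hypothesis $n<2\ell$, so the aside is inconsistent as written. But the forward direction does not actually require Lemma~\ref{lem:root}: the multiset $\{\vx[i,i+\ell-1]:1\le i\le r\}$ is rotation-invariant regardless of whether its elements are distinct, and since $r\mid n-\ell+1$ each residue class contributes equally. So simply drop the appeal to Lemma~\ref{lem:root} there and your parenthetical becomes correct.
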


\begin{proof}
Suppose that $\vx$ and $\vx'$ are root-conjugates with $\pi(\vx)=r$ and $n-\ell +1=rs$ for some $s$.
Then it can be verified that $\F(\vx,\ell)=\F(\vx',\ell)=\{\vx[i,i+\ell-1]: 1\le  i\le r\}$
and $p(\vx,\vz)=p(\vx',\vz)=s$ for all $\vz\in \F(\vx,\ell)$.
Therefore, $\vp(\vx,\ell)=\vp(\vx',\ell)$.

Conversely, let $\vp(\vx,\ell)=\vp(\vx',\ell)$. Then $\F(\vx,\ell)=\F(\vx',\ell)$. By Lemma \ref{lem:ts}, 
$\vx,\vx'$ are root-conjugates with $\pi(\vx)\leq n-\ell +1$. Let $r=\pi(\vx)$. It remains to show that $r \mid n-\ell+1$.

Suppose otherwise and let $n-\ell+1=rs+t$ with $1\le t\le r-1$. Let the roots of $\vx$ and $\vx'$ be $\vu\vv$ and $\vv\vu$, respectively. Therefore, we can write $\vx$ and $\vx'$ as
\[
\vx = \overbrace{\underbrace{\vu\vv}_{r}\underbrace{\vu\vv}_{r}\cdots\underbrace{\vu\vv}_{r}}^{s \text{ times}}\underbrace{\vw}_{t+\ell-1}\mbox{\quad and  \quad}
\vx' = \overbrace{\underbrace{\vv\vu}_{r}\underbrace{\vv\vu}_{r}\cdots\underbrace{\vv\vu}_{r}}^{s \text{ times}}
\underbrace{\vw'}_{t+\ell-1}.
\]

We have the following cases.
\begin{enumerate}[(i)]
\item If $1\le t<|\vu|$, let $\vz'$ be the $\ell$-length prefix of $\vx'$. 
Since $|\vw'|=t+\ell-1\ge \ell$ and $\vz'$ is a prefix of $\vw'$, we have $p(\vx',\vz')\ge s+1$.
On the other hand, from Lemma \ref{lem:root}, 
the $\ell$-gram of $\vz'$ can only appear after the first $|\vu|$ coordinates of $\vw$. 
However, $|\vw|-|\vu|< t+(\ell-1)-t<\ell$, and so, there is no occurrence of $\vz'$ as an $\ell$-gram of $\vw$.
Therefore, $p(\vx,\vz')=s<p(\vx',\vz')$, contradicting the assumption that $\vp(\vx,\ell)=\vp(\vx',\ell)$.
\item If $|\vu|\le t\le r-1$, let $\vz=\vx[|\vu|,|\vu|+\ell-1]$.
Since $|\vw|=t+\ell-1\ge |\vu|+\ell-1$, we have $p(\vx,\vz)\ge s+1$.
With the same considerations as before, we check that there is no occurence of $\vz$ as an $\ell$-gram of $\vw'$,
So, $p(\vx',\vz)=s<p(\vx,\vz)$, a contradiction.
\end{enumerate}

Therefore, we conclude $t=0$ or $r \mid n-\ell+1$, as desired.
\end{proof}

Hence, for $\ell\le n<2\ell$, we have $\vx\sim\vx'$ if and only if
$\vx$ and $\vx'$ are root-conjugates with  $\pi(\vx) \mid n-\ell +1$.
We compute the number of equivalence classes using this characterization.

A word is said to be {\em aperiodic} if it is not equal to any of its nontrivial rotations.
An aperiodic word of length $r$ is said to be {\em Lyndon} if 
it is the lexicographically least word amongst all of its $r$ rotations.
The number of Lyndon words  \cite{Lyndon:1954} of length $r$ is given by 
\begin{equation}\label{eq:lyndon}
L_q(r)=\frac1r\sum_{t \mid r}\mu\left(\frac {r}{t}\right)q^t.
\end{equation}

For any integer $r \mid n-\ell+1$ and any word $\vx$, 
if $\pi(\vx)=r$ and $\vh(\vx)$ is its root,
then $\vh(\vx)$ is aperiodic and is a rotation of some Lyndon word $\vu(\vx)$.
Let $\vu(\vx)$ be the representative of the equivalence class of $\vx$.
Since there are $r$ rotations of $\vu(\vx)$, 
there are $r$ words in the equivalence class of $\vx$.
Therefore, the number of equivalence classes is
\[q^n-\sum_{r \mid n-\ell+1}(r-1)L_q(r),\]
and, consequently, we obtain \eqref{nsmall}.

\begin{exa}
Let $n=5$, $\ell=4$, and $q=2$. 
Consider the words $\vx=10101$ and $\vx'=01010$, which are root-conjugates with minimum period two.
Since $2|n-\ell+1$, it follows from Theorem \ref{thm:lyndon} that $\vx$ and $\vx'$ have the same profile vector.

Conversely, if there are two distinct words $\vx$ and $\vx'$ such that $\vx\sim\vx'$, then Theorem \ref{thm:lyndon} 
states that the minimum period of $\vx$ divides two. 
It is then not difficult to argue that the pair of words $\vx$ and $\vx'$ must be $10101$ and $01010$.
Therefore, the number of distinct profile vectors $P_2(5,4)$ is 31. More generally, in the case $\ell=n-1$, 
\eqref{nsmall} reduces to $P_q(n,\ell)=q^n-\binom{q}{2}$.
\end{exa}

From Theorem \ref{thm:lyndon},
if $\vx$ and $\vx'$ are root-conjugates with  $\pi(\vx) \mid n-\ell +1$,
we have $\vp(\vx,\ell)=\vp(\vx',\ell)$ for {\em all} values of $n$. 
In other words, the number of equivalence classes computed above provides an upper bound 
for the number of profile vectors. Formally, we have the following corollary.

\begin{cor} For $n\ge 2\ell$,
\[P_q(n,\ell)\le q^n-\sum_{r \mid n-\ell+1}\sum_{t \mid r}\left(\frac{r-1}{r}\right)\mu\left(\frac rt\right) q^{t}.\]
\end{cor}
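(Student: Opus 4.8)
The plan is to reuse, essentially verbatim, the Lyndon-word bookkeeping already carried out for the range $\ell\le n<2\ell$, after isolating the one ingredient of Theorem~\ref{thm:lyndon} that does not depend on the hypothesis $n<2\ell$. First I would observe that the forward implication of Theorem~\ref{thm:lyndon}---if $\vx$ and $\vx'$ are root-conjugates with $\pi(\vx)\mid n-\ell+1$, then $\vp(\vx,\ell)=\vp(\vx',\ell)$---holds for every $n\ge\ell$. Indeed its proof never uses $n<2\ell$: writing $r=\pi(\vx)$, $n-\ell+1=rs$, and $\vx$ (resp.\ $\vx'$) as the length-$n$ prefix of the periodic word with root $\vu\vv$ (resp.\ $\vv\vu$), the multiset of $\ell$-grams of $\vx$ consists of the $r$ length-$\ell$ windows taken at the $r$ phases modulo $r$, each with multiplicity $s$; since the periodic word with root $\vv\vu$ is the cyclic shift by $|\vu|$ of the one with root $\vu\vv$, these two multisets coincide, whence $\vp(\vx,\ell)=\vp(\vx',\ell)$.

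Next I would introduce the relation $\approx$ on $\bbracket{q}^n$: the restriction of root-conjugacy to words whose minimum period divides $n-\ell+1$, extended by singleton classes on all other words. Since root-conjugacy is an equivalence relation and root-conjugates share the same minimum period (both noted in the text), $\approx$ is an equivalence relation, and the previous paragraph shows it refines $\sim$. Hence $P_q(n,\ell)$, the number of $\sim$-classes, is at most the number of $\approx$-classes.

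Finally I would count the $\approx$-classes exactly as in Section~\ref{sec:nsmall}: for each $r\mid n-\ell+1$ there are $rL_q(r)$ words of length $n$ with minimum period $r$ (they are in bijection with the $rL_q(r)$ aperiodic strings of length $r$ via $\vx\mapsto\vh(\vx)$), and these split into $L_q(r)$ root-conjugacy classes of size $r$; every remaining word forms its own class. This yields $q^n-\sum_{r\mid n-\ell+1}(r-1)L_q(r)$ classes, which equals the asserted bound after substituting $L_q(r)=\tfrac1r\sum_{t\mid r}\mu(r/t)q^t$ from \eqref{eq:lyndon}. The only step requiring any care---and it is minor---is the first one: checking that the ``if'' direction of Theorem~\ref{thm:lyndon} is insensitive to the relative sizes of $n$ and $\ell$ (in particular, the $r$ window-$\ell$-grams need no longer be distinct when $n-\ell+1>\ell$, which is harmless since we only need a multiset identity). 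Everything after that is the identical combinatorial computation, now read as an upper bound rather than an exact count.
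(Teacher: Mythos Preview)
Your approach mirrors the paper's brief justification---invoke the forward direction of Theorem~\ref{thm:lyndon} for arbitrary $n$, then reuse the Lyndon-word count---but the counting step contains a real gap. You assert that for each $r\mid n-\ell+1$ there are exactly $rL_q(r)$ words of length $n$ with minimum period $r$, via the ``bijection'' $\vx\mapsto\vh(\vx)$. This map is always injective, but surjectivity can fail once $n\ge 2\ell$: given a primitive $\vu'$ of length $r$, the length-$n$ prefix of $\vu'^{\infty}$ may have minimum period strictly smaller than $r$ whenever $n<2r-1$, and such $r$ do occur as divisors of $n-\ell+1$ in this regime. Concretely, take $q=2$, $\ell=2$, $n=6$, $r=5$: the primitive word $00010$ yields the length-$6$ prefix $000100$, whose minimum period is $4$, not $5$; a direct count gives only $22$ binary words of length $6$ with $\pi(\vx)=5$, not $5L_2(5)=30$. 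Hence your relation $\approx$ has \emph{more} than $q^n-\sum_{r}(r-1)L_q(r)$ classes here, and the chain $P_q(n,\ell)\le\#\{\approx\text{-classes}\}=q^n-\sum_r(r-1)L_q(r)$ breaks at the equality.

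The repair is to change the grouping so that class sizes are controlled without reference to \emph{minimum} period. For each $r\mid n-\ell+1$ and each Lyndon word $\vu$ of length $r$, set $G(\vu)=\{\text{length-}n\text{ prefix of }\vu'^{\infty}:\vu'\text{ a rotation of }\vu\}$. Each $G(\vu)$ has exactly $r$ elements (rotations of a primitive word are distinct and $r\le n$), and your own periodicity argument---which, as you correctly observe, needs $r$ only to be \emph{a} period of the words, not the minimum one---shows that all members of $G(\vu)$ share the same $\ell$-gram profile. The sets $G(\vu)$ are pairwise disjoint: if $\vx\in G(\vu)\cap G(\vv)$ with $|\vu|=r$, $|\vv|=r'$ both dividing $n-\ell+1$, then either $r=r'$ and $\vx[1,r]$ forces $\vu=\vv$, or $r\ne r'$ and $n\ge \mathrm{lcm}(r,r')\ge r+r'-\gcd(r,r')$, so Fine--Wilf makes $\vx[1,\max(r,r')]$ a proper power, contradicting primitivity. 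With these disjoint size-$r$ blocks in hand, the count $q^n-\sum_{r\mid n-\ell+1}(r-1)L_q(r)$ now legitimately upper-bounds $P_q(n,\ell)$.
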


Next, we assume $n<2\ell$ and provide efficient methods to encode and decode messages into 
$q$-ary words of length $n$ with distinct $\ell$-gram profile vectors.
To do so, we make use of the following simple observation from Theorem \ref{thm:lyndon}.

\begin{lem}\label{lem:encode-lyndon}
Let $n<2\ell$ and $\vx$ be a $q$-ary word of length $n$ such that $\vx[\ell-1]\ne\vx[n]$.
If $\vx\sim\vx'$, then $\vx=\vx'$.
\end{lem}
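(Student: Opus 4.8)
The plan is to use the characterization in Theorem~\ref{thm:lyndon}: for $\ell \le n < 2\ell$, if $\vx \sim \vx'$ then $\vx$ and $\vx'$ are root-conjugates with $\pi(\vx) \mid n-\ell+1$. So it suffices to show that the hypothesis $\vx[\ell-1] \ne \vx[n]$ forces $\pi(\vx) \ge n-\ell+2$, since then the only divisor $r$ of $n-\ell+1$ with $r \le \pi(\vx)$ that could serve as a period is ruled out—more precisely, $\pi(\vx) \mid n-\ell+1$ together with $\pi(\vx) > n-\ell+1$ is impossible unless $n-\ell+1 = 0$, i.e. the equivalence class of $\vx$ is a singleton. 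Wait—one must be careful: $\pi(\vx)$ could be strictly larger than $n-\ell+1$ on its own, in which case $\pi(\vx) \mid n-\ell+1$ already fails and we are done. So the real content is: assuming $\pi(\vx) \le n-\ell+1$, derive a contradiction with $\vx[\ell-1] \ne \vx[n]$.

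So first I would suppose for contradiction that $\vx \sim \vx'$ with $\vx \ne \vx'$; then by Theorem~\ref{thm:lyndon}, $r := \pi(\vx)$ divides $n-\ell+1$, hence in particular $r \le n-\ell+1 \le \ell$. Now I want to show positions $\ell-1$ and $n$ are congruent modulo $r$, so that periodicity forces $\vx[\ell-1] = \vx[n]$, contradicting the hypothesis. The key observation is that $r \mid n-\ell+1$ means $n - \ell + 1 \equiv 0 \pmod r$, so $n \equiv \ell - 1 \pmod r$, i.e. $n$ and $\ell-1$ differ by a multiple of $r$. Since $\pi(\vx) = r$, the word $\vx$ satisfies $\vx[s] = \vx[s+r]$ for all valid $s$ (as $r$ is a period in the factorization sense with $|\vx| = n$, this holds for $1 \le s \le n-r$), and iterating, $\vx[s] = \vx[s']$ whenever $s \equiv s' \pmod r$ and both lie in $[1,n]$. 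Both $\ell-1$ and $n$ lie in $[1,n]$ (using $\ell \le n$, so $\ell - 1 \le n - 1 < n$ and $\ell-1 \ge 1$ as $\ell \ge 2$; the case $\ell=1$ is trivial since then all words of length $n$ have the same profile only when... actually $\ell=1$ should be handled or excluded by the standing assumption $n \ge \ell$ with the substring indexing making sense). Hence $\vx[\ell-1] = \vx[n]$, the desired contradiction.

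I would organize the write-up as: (1) if $n - \ell + 1 = 0$ the claim is vacuous/trivial since $P_q$ equivalence classes are singletons when $n = \ell-1$—but actually the hypothesis is $n < 2\ell$ and $\vx[\ell-1]$ is referenced so we need $\ell - 1 \le n$, consistent with $n \ge \ell$; (2) assume $\vx \sim \vx'$, $\vx \ne \vx'$, invoke Theorem~\ref{thm:lyndon} to get $r = \pi(\vx) \mid n-\ell+1$; (3) deduce $\ell - 1 \equiv n \pmod r$; (4) use the period property of $\vx$ to conclude $\vx[\ell-1] = \vx[n]$, contradiction. The main (minor) obstacle is just being careful about index ranges and the degenerate boundary cases $\ell = 1$ or $n = \ell - 1$, and making sure the periodicity relation $\vx[s] = \vx[s+r]$ is applied only within the valid index range $[1,n]$; chaining it along an arithmetic progression from $\ell-1$ up to $n$ (or down from $n$ to $\ell-1$) stays inside $[1,n]$ by construction, so no issue arises. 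This is a short argument; no heavy computation is needed.
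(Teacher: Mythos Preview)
Your argument is correct and follows essentially the same route as the paper: assume $\vx\ne\vx'$, invoke Theorem~\ref{thm:lyndon} to get $\pi(\vx)\mid n-\ell+1$, and use periodicity to force $\vx[\ell-1]=\vx[n]$. The only cosmetic difference is that the paper observes directly that $n-\ell+1$ is itself a period of $\vx$ (being a multiple of $\pi(\vx)$), so $\vx[n]=\vx[n-(n-\ell+1)]=\vx[\ell-1]$ in one step, whereas you chain along the minimum period $r$; both reach the same contradiction.
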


\begin{proof} Suppose otherwise that $\vx'$ and $\vx$ are distinct. 
Then Theorem \ref{thm:lyndon} implies that $\pi(\vx)|n-\ell+1$.
In other words, $n-\ell+1$ is a period of $\vx$ and 
hence $\vx[n]=\vx[n-(n-\ell+1)]=\vx[\ell-1]$, yielding a contradiction.
\end{proof}

Lemma \ref{lem:encode-lyndon} then motivates the 
encoding and decoding methods presented as Algorithms \ref{encode1} and \ref{decode1}, respectively.
Define $\C$ to be the image of ${\tt encode}_1$. 

\begin{exa}Set $q=2$, $n=5$, and $\ell=4$.

Suppose that we encode $\vc=00001\in \bbracket{2}^4\times\{1\}$.
Applying Algorithm \ref{encode1}, since $\vc[3]\ne\vc[5]$, then $\vx={\tt encode}_1(\vc)=\vc$.

On the other hand, suppose that we encode $\vc=00101\in \bbracket{2}^4\times\{1\}$.
Applying Algorithm \ref{encode1}, since $\vc[3]=\vc[5]$, then $\vx=00100$ by setting the last bit of $\vc$ to zero.

We compute ${\tt encode}_1(\vc)$ for all $\vc\in\bbracket{2}^4\times\{1\}$ to obtain the code $\C$.
\begin{align*}
{\tt encode}_1(00001) & =00001, &
{\tt encode}_1(01001) & =01001, &
{\tt encode}_1(10001) & =10001, &
{\tt encode}_1(11001) & =11001, \\
{\tt encode}_1(00011) & =00011, &
{\tt encode}_1(01011) & =01011, &
{\tt encode}_1(10011) & =10011, &
{\tt encode}_1(11011) & =11011, \\
{\tt encode}_1(00101) & =00100, &
{\tt encode}_1(01101) & =01100, &
{\tt encode}_1(10101) & =10100, &
{\tt encode}_1(11101) & =11100, \\
{\tt encode}_1(00111) & =00110, &
{\tt encode}_1(01111) & =01110, &
{\tt encode}_1(10111) & =10110, &
{\tt encode}_1(11111) & =11110.
\end{align*}
Each word in $\C$ has its third coordinate different from its last coordinate, and 
no two words in $\C$ share the same profile vector.
\end{exa}

We summarize our observations in the following proposition.

\begin{prop}\label{prop:constr}Let $n<2\ell$. Consider the maps ${\tt encode}_1$ and ${\tt decode_1}$ defined by Algorithms \ref{encode1} and \ref{decode1} and the code $\C$. Then $\vp(\vx,\ell)\ne \vp(\vx',\ell)$  for any two distinct words $\vx, \vx'\in\C$ and ${\tt decode}_1\circ{\tt encode}_1(\vc)=\vc$ for all $\vc \in \bbracket{q}^{n-1}\times \{1,2,\ldots,q-1\}$.
Hence, ${\tt encode}_1$ is injective and $|\C|=q^{n-1}(q-1)$.
Furthermore, ${\tt decode}_1$ and ${\tt encode}_1$ computes their respective strings in $O(n)$ time.
\end{prop}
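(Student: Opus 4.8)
The plan is to establish Proposition \ref{prop:constr} by verifying, in order, the four claims it makes about $\mathtt{encode}_1$, $\mathtt{decode}_1$, and $\C$: (1) distinct codewords in $\C$ have distinct profile vectors; (2) $\mathtt{decode}_1 \circ \mathtt{encode}_1$ is the identity on the message set $\bbracket{q}^{n-1}\times\{1,\ldots,q-1\}$; (3) consequently $\mathtt{encode}_1$ is injective and $|\C| = q^{n-1}(q-1)$; (4) both maps run in $O(n)$ time. The first claim is the heart of the matter and follows almost directly from Lemma \ref{lem:encode-lyndon}. First I would recall what Algorithm \ref{encode1} does: given a message $\vc\in\bbracket{q}^{n-1}\times\{1,\ldots,q-1\}$, it outputs a word $\vx$ of length $n$ whose entry $\vx[\ell-1]$ differs from $\vx[n]$ — the natural implementation being to leave $\vc$ unchanged if already $\vc[\ell-1]\ne\vc[n]$, and otherwise to overwrite the last coordinate (say, set it to $0$, using that the last message symbol was nonzero so the coordinate genuinely changes and the map stays injective on its domain). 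Thus every $\vx\in\C$ satisfies the hypothesis $\vx[\ell-1]\ne\vx[n]$ of Lemma \ref{lem:encode-lyndon}. Now if $\vx,\vx'\in\C$ are distinct with $\vp(\vx,\ell)=\vp(\vx',\ell)$, then $\vx\sim\vx'$, and Lemma \ref{lem:encode-lyndon} forces $\vx=\vx'$, a contradiction; hence distinct codewords have distinct profile vectors.

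Next I would verify claim (2), that $\mathtt{decode}_1$ inverts $\mathtt{encode}_1$ on the message set. Here $\mathtt{decode}_1$ must reconstruct the original message from $\vx$: the first $n-1$ coordinates of the message are read off directly (they equal $\vx[1,n-1]$ except possibly $\vx[n-1]$ is not touched — actually only $\vx[n]$ is ever altered, so $\vc[1,n-1]=\vx[1,n-1]$ when no overwrite happened), and the last message symbol $\vc[n]\in\{1,\ldots,q-1\}$ is recovered from whether an overwrite occurred. I would spell out the two cases of the encoder and check that $\mathtt{decode}_1$ distinguishes them — the flag ``an overwrite happened'' is detectable because after an overwrite we have $\vx[n]=0$ while $\vx[\ell-1]=\vx[n]$ originally, i.e. one checks a parity/equality condition on $\vx$ that matches exactly the branch taken. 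In each branch, substituting the encoder's output into the decoder returns $\vc$. Claim (3) is then immediate: a map with a left inverse is injective, so $|\C|=|\bbracket{q}^{n-1}\times\{1,\ldots,q-1\}| = q^{n-1}(q-1)$. Claim (4) is routine: each algorithm reads and/or writes a constant number of coordinates of a length-$n$ word, performs one equality test, and does no unbounded loop, so both run in $O(n)$ time (dominated by reading the input).

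The main obstacle is not mathematical depth but fidelity to the (not fully reproduced) pseudocode of Algorithms \ref{encode1} and \ref{decode1}: the proof must match whatever exact convention those algorithms use for which coordinate is overwritten, how the overwrite is signalled, and how $\mathtt{decode}_1$ detects it. I would therefore phrase the argument to depend only on the two structural facts that (a) every output of $\mathtt{encode}_1$ lies in $\{\vx : \vx[\ell-1]\ne\vx[n]\}$ and (b) the encoder alters at most the final coordinate in an invertible way on the domain $\bbracket{q}^{n-1}\times\{1,\ldots,q-1\}$, so that the core implication ``distinct codewords $\Rightarrow$ distinct profiles'' rests cleanly on Lemma \ref{lem:encode-lyndon} and the rest is bookkeeping. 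A minor point worth stating explicitly is that the hypothesis $n<2\ell$ is exactly what makes $\ell-1 \le n-1 < n$ so that the coordinates $\vx[\ell-1]$ and $\vx[n]$ are well-defined and distinct positions (indeed $\ell-1 \ge 1$ and $n-\ell+1 \le \ell-1$), which is needed both for Lemma \ref{lem:encode-lyndon} to apply and for the encoder's test to make sense.
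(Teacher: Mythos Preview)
Your proposal is correct and matches the paper's approach. In fact, the paper does not give a formal proof of this proposition at all: it simply states the proposition as a summary of the preceding observations, relying on Lemma~\ref{lem:encode-lyndon} and the algorithm definitions. Your write-up unpacks exactly those details, and your guesses about the pseudocode are spot-on --- Algorithm~\ref{encode1} sets $\vx=\vc$ if $\vc[\ell-1]\ne\vc[n]$ and otherwise replaces the last symbol by $0$, while Algorithm~\ref{decode1} tests whether $\vx[n]=0$ to decide which branch was taken --- so the hedging in your second and third paragraphs is unnecessary.

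One small correction: your remark that $n<2\ell$ is what guarantees $\ell-1$ and $n$ are distinct positions is not quite right. The positions are distinct simply because $\ell\le n$; the hypothesis $n<2\ell$ is used only to invoke Lemma~\ref{lem:encode-lyndon} (through Theorem~\ref{thm:lyndon}).
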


Therefore, for $n<2\ell$, we have $P_q(n,\ell)\ge q^{n-1}(q-1)$. 
Now, observe that 
Algorithm \ref{encode1} encodes $(n-1)\log_2 q+\log_2(q-1)$ bits of information,
while the set of all $q$-ary words of length $n$ has the capacity to encode $n\log_2 q$ bits of information.
Hence, by imposing the constraint that the words have distinct profile vectors, 
we only lose $\log_2 q-\log_2(q-1)\le 1$ bit of information.

%
	
	\begin{algorithm}[!t]
		\caption{${\tt encode}_1(\vc)$}\label{encode1}
		\begin{algorithmic}
			\REQUIRE Data string $\vc$, where $\vc\in\bbracket{q}^{n-1}\times \{1,2,\ldots,q-1\}$.
			\ENSURE $\vx\in \bbracket{q}^{n}$ such that $\vx[\ell-1]\ne \vx[n]$.
			\vspace{0.05in}
			\IF{$\vc[n]\ne\vc[\ell-1]$}
			\STATE $\vx \gets \vc$ 
			\ELSE
			\STATE $\vx \gets $ append $\vc[1,n-1]$ with $0$
			\ENDIF
			\RETURN{$\vx$}
		\end{algorithmic}
	\end{algorithm}
	
	\vspace{2mm}
	
	\begin{algorithm}[!t]
		\caption{${\tt decode}_1(\vx)$}\label{decode1}
		\begin{algorithmic}
			\REQUIRE Codeword $\vx\in \bbracket{q}^n$.
			\ENSURE $\vc\in\{1,2,\ldots,q-1\}^{(n-1)}\times \{1,2,\ldots,q-1\}$.
			\vspace{0.05in}
			
			\IF{$\vx[n]\ne 0$}
			\STATE $\vc\gets\vx$
			\ELSE
			\STATE $\vc\gets$ append $\vx[1,n-1]$  with $\vx[\ell-1]$
			\ENDIF
			\RETURN{$\vc$}
		\end{algorithmic}
	\end{algorithm}

\section{Distinct Profile Vectors from Addressable Codes}\label{sec:addressing}

Borrowing ideas from synchronization, we construct 
a set of words with different profile vectors and prove \eqref{addressing}.
Here, our strategy is to mimic the concept of {\it watermark} and {\it marker codes} 
\cite{Sellers:1962, Kashyap.Neuhoff:1999,Davey.MacKay:2001}, 
where a marker pattern is distributed throughout a codeword.
Due to the unordered nature of the short reads, instead of a single marker pattern, 
we consider a set of patterns. 

More formally, suppose that $0<2a\le \ell\le n$. 
Let $\A=\{\vu_1,\vu_2,\ldots,\vu_M\}\subseteq \bbracket{q}^a$ be a set of $M$ sequences of length $a$.
Elements of $\A$ are called {\em addresses}. 
A word $\vx=\vz_1\vz_2\cdots\vz_M$, where $|\vz_i|=\ell$ for all $1\le i\le M$, 
is said to be {\em $(\A,\ell)$-addressable} 
if the following properties hold.
\begin{enumerate}[(C1)]
\item The prefix of length $a$ of $\vz_i$ is equal to $\vu_i$ for all $1\le i\le M$. In other words, $\vz_i[1,a]=\vu_i$.
\item $\vz_i[j,j+a-1]\notin\A$ for all $1\le i\le M$ and $2\le j\le \ell-a+1$.  
\end{enumerate}
Conditions (C1) and (C2) imply that the address $\vu_i\in\A$ appears exactly once as the prefix of $\vz_i$ and 
does not appear as an $a$-gram of any substring $\vz_j$ with $j\ne i$. 
A code $\C$ is {\em $(\A,\ell)$-addressable} if all words in $\C$ are $(\A,\ell)$-addressable.

Intuitively, 
given an $(\A,\ell)$-addressable word $\vx$, 
we can make use of the addresses in $\A$ to identify the position of each $\ell$-gram in $\vx$ 
and hence, reconstruct $\vx$. We formalize this idea in the following theorem.

\begin{thm}\label{thm:address}
Let $2a\le \ell$ and $\A=\{\vu_1,\vu_2,\ldots, \vu_M\}$ be a set of addresses of length $a$.
Suppose that $\C$ is an $(\A,\ell)$-addressable code.
For distinct words $\vx,\vx'\in\C$, we have $\F(\vx,\ell)\ne \F(\vx',\ell)$.
Therefore, $\vp(\vx,\ell)\ne \vp(\vx',\ell)$ and $P_q(n,\ell)\ge |\C|$.
\end{thm}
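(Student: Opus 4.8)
The plan is to show that from the multiset of $\ell$-grams of an $(\A,\ell)$-addressable word $\vx=\vz_1\vz_2\cdots\vz_M$ — in fact already from the \emph{set} $\F(\vx,\ell)$ — one can uniquely recover $\vx$ itself. Once this reconstruction map is exhibited, distinct codewords $\vx\ne\vx'$ in $\C$ must have distinct $\F(\vx,\ell)$, hence distinct profile vectors (since $\F$ is the support of $\vp$, but more directly two equal profile vectors force equal supports), and then $P_q(n,\ell)\ge|\C|$ is immediate because the equivalence classes of $\sim$ restricted to $\C$ are all singletons.

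The key structural observation, which I would isolate first, is that conditions (C1) and (C2) pin down exactly which $\ell$-grams of $\vx$ contain an address as a prefix. Concretely: an $\ell$-gram $\vx[i,i+\ell-1]$ has some $\vu_k\in\A$ equal to its length-$a$ prefix $\vx[i,i+a-1]$ \emph{if and only if} $i\equiv 1\pmod{\ell}$, i.e.\ the $\ell$-gram starts exactly at a block boundary and equals $\vz_k$. The ``if'' direction is (C1). For the ``only if'' direction, suppose $i$ is not a block boundary; write $i-1 = q\ell + j-1$ with $1\le j\le \ell$ and $j\ge 2$. Then $\vx[i,i+a-1]$ either lies entirely inside one block $\vz_{q+1}$, starting at position $j$ with $2\le j\le \ell-a+1$ — and (C2) says this is not in $\A$ — or it straddles the boundary between $\vz_{q+1}$ and $\vz_{q+2}$; here I would use $2a\le\ell$ to argue the straddling case reduces to the interior case or is handled directly: the $a$-gram starting at position $j$ of $\vz_{q+1}$ with $\ell-a+1 < j \le \ell$ has its first $\ell-j+1 \ge 1$ symbols inside $\vz_{q+1}$ and the rest at the start of $\vz_{q+2}$; I need to check (C2) as stated, which only covers $j\le\ell-a+1$, so the straddling $a$-grams require a short separate argument — this is the one place where $2a \le \ell$ is genuinely used, ensuring each straddling $a$-gram overlaps $\vz_{q+1}$ in at least $a - (a-1) = 1$ but we actually want: its prefix of length $\ge 1$ sits at position $> \ell - a + 1$ of $\vz_{q+1}$; comparing with any $\vu_k$ would force $\vu_k$'s prefix to match a tail of $\vz_{q+1}$, and I would rule this out using (C2) applied to the successor block together with (C1). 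I expect this boundary-straddling case to be the main technical obstacle, and I would state it as a small lemma: \emph{the only $\ell$-grams of an $(\A,\ell)$-addressable word whose length-$a$ prefix lies in $\A$ are $\vz_1,\dots,\vz_M$ themselves, each occurring exactly once as such.}

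Granting that lemma, the reconstruction is mechanical: given $\F(\vx,\ell)$, scan each element $\vz\in\F(\vx,\ell)$ and test whether $\vz[1,a]\in\A$; the $\vz$'s passing this test are precisely $\{\vz_1,\dots,\vz_M\}$, and since the addresses $\vu_1,\dots,\vu_M$ are distinct, each such $\vz$ is labelled unambiguously by its index $k$ via $\vz[1,a]=\vu_k$. Concatenating $\vz_1\vz_2\cdots\vz_M$ in index order recovers $\vx$. Hence $\vx\mapsto\F(\vx,\ell)$ is injective on $\C$, so $\vx\ne\vx'$ in $\C$ gives $\F(\vx,\ell)\ne\F(\vx',\ell)$; a fortiori $\vp(\vx,\ell)\ne\vp(\vx',\ell)$. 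Finally, since the $\sim$-classes meeting $\C$ are singletons, $\C$ injects into the set of profile vectors of length-$n$ words, giving $P_q(n,\ell)\ge|\C|$, which completes the proof. (Equation \eqref{addressing} will then follow in the next part of the paper by constructing a suitable $\A$ and a large $(\A,\ell)$-addressable code.)
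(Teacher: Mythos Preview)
Your proposed key lemma --- that the only $\ell$-grams of an $(\A,\ell)$-addressable word whose length-$a$ prefix lies in $\A$ are the blocks $\vz_1,\ldots,\vz_M$ --- is false, and the straddling case you flagged is exactly where it fails. Take the paper's own Example~\ref{exa:address-encode}: $q=4$, $a=2$, $\ell=5$, $\A^*=\{00,13,22,31\}$, and $\vx=(00101,13023,22111,31210)$. The $5$-gram $\vx[15,19]=13121$ has prefix $13=\vu_2\in\A^*$, yet it is not a block (it straddles the boundary between $\vz_3$ and $\vz_4$). So your reconstruction rule ``pick out the $\ell$-grams with prefix in $\A$'' returns both $13023$ and $13121$ with label $\vu_2$, and the procedure breaks. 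Conditions (C1)--(C2) say nothing about $a$-grams that cross block boundaries, and no amount of staring at the successor block's prefix rules this out in general.

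The paper's proof sidesteps reconstruction entirely. It picks an index $i$ with $\vz_i\ne\vz'_i$ and shows directly that $\vz_i\notin\F(\vx',\ell)$: if $\vz_i$ appeared in $\vx'$ starting at a non-block position, then (since its prefix $\vu_i$ can only sit at a block start or straddle a boundary, and the block-start case forces $\vz_i=\vz'_i$) it would straddle some boundary, and then --- here is where $2a\le\ell$ enters --- the next address $\vu_{i'}$ would lie entirely inside $\vz_i$ at a position in $[2,\ell-a+1]$, contradicting (C2) \emph{applied to $\vz_i$ as a block of $\vx$}. The subtle point you missed is that (C2) is invoked on $\vz_i$, not on the straddling positions of $\vx'$. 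Your approach can in fact be repaired by strengthening the test: an $\ell$-gram $\vz$ is a block if and only if $\vz[1,a]\in\A$ \emph{and} $\vz[j,j+a-1]\notin\A$ for all $2\le j\le\ell-a+1$; the ``only if'' direction then reduces to the paper's straddling argument.
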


\begin{proof}
Let $\vx=\vz_1\vz_2\cdots\vz_M$ and $\vx'=\vz'_1\vz'_2\cdots\vz'_M$ be distinct $(\A,\ell)$-addressable words in $\C$.
Without loss of generality, we assume $\vz_1\ne\vz_1'$. 
Observe that $\vz_1\in \F(\vx,\ell)$. To prove the theorem, it suffices to show that $\vz_1\notin\F(\vx',\ell)$.

Suppose otherwise that  $\vz_1$ appears as an $\ell$-gram in $\vx'$.
Since $\vu_1$ is a prefix of $\vz_1$ with $\vz_1\ne \vz_1'$, by Conditions (C1) and (C2), 
we have that
\[\vx'=\cdots\overbrace{\circ\circ\underbrace{\oplus\oplus\cdots\oplus}_{|\vu_i|=a}\oplus\oplus}^{|\vz_1|=\ell}++\cdots 
\text{ for some }i\ne 1.\]
Here, $\circ$'s and $+$'s represent the $\ell$-grams $\vz_1$ and $\vz'_i$, respectively, and 
$\oplus$'s indicate the symbols that are in the overlap of the two $\ell$-grams.
Since $2a\le \ell$, $\vu_i$ must be in $\vz_1$ as an $a$-gram, contradicting Condition (C2).
\end{proof}

To employ Theorem \ref{thm:address}, we define the following set of addresses,
\begin{equation}\label{eq:addresses}
\A^*\triangleq \left\{(u_1,u_2,\ldots,u_a): \sum_{i=1}^a u_i=0 \bmod q\right\}.
\end{equation}
So, $\A^*$ is a set of $M=q^{a-1}$ addresses and we list the addresses as $\vu_1,\vu_2,\ldots,\vu_M$.
To construct an $(\A^*,\ell)$-addressable code, we consider the encoding map 
${\tt encode}_2:\{1,2,\ldots,q-1\}^{(\ell-a)M}\to \bbracket{q}^{M\ell}$ given in Algorithm \ref{encode}
and define $\C$ to be the image of ${\tt encode}_2$. 
Conversely, we consider the decoding map 
${\tt decode}_2:\C\to \{1,2,\ldots,q-1\}^{(\ell-a)M}$ given in Algorithm \ref{decode}.

\begin{algorithm}[!t]
\caption{${\tt encode}_2(\vc,\A^*)$}\label{encode}
\begin{algorithmic}
\REQUIRE Data string $\vc=\vc_1\vc_2\cdots\vc_M$, 
\STATE ~~~~~~where $\vc_i\in\{1,2,\ldots,q-1\}^{(\ell-a)}$ for $1\le i\le M$,
\STATE ~~~~~~and $\A^*$ is defined by \eqref{eq:addresses}.
\ENSURE $\vx=\vz_1\vz_2\cdots\vz_M\in\bbracket{q}^{M\ell}$, where $\vx$ is $(\A^*,\ell)$-addressable.
\vspace{0.05in}
\FOR{$1\le i\le  M$}
\STATE $\vz_i\gets \vu_i$ ($\vu_i$ has length $a$)
\FOR{$a+1\le j\le  \ell$}
\STATE $z_{\rm bad} \gets -\sum_{s=1}^{a-1}\vz_i[j-s]\bmod q$ 
\STATE~~~~~~~~~(negative of the sum of the last $a-1$ entries modulo $q$)
\STATE $z\gets \vc_i[j-a]$-th element of $(\bbracket{q}\setminus\{z_{\rm bad}\})$
\STATE append $\vz_i$ with $z$
\ENDFOR
\ENDFOR
\RETURN{$\vz_1\vz_2\cdots\vz_M$}
\end{algorithmic}
\end{algorithm}

\vspace{2mm}

\begin{algorithm}[!t]
\caption{${\tt decode}_2(\vz_1\vz_2\cdots\vz_M)$}\label{decode}
\begin{algorithmic}
\REQUIRE Codeword $\vz_1\vz_2\cdots\vz_M \in\C$.
\ENSURE $\vc_1\vc_2\cdots\vc_M\in \{1,2,\ldots,q-1\}^{(\ell-a)M}$.
\vspace{0.05in}

\FOR{$1\le i\le  M$}
\FOR{$a+1\le j\le  \ell$}
\STATE $z_{\rm bad} \gets -\sum_{s=1}^{a-1}\vz_i[j-s]\bmod q$ 
\STATE~~~~~~~~~(negative of the sum of the last $a-1$ entries modulo $q$)
\STATE $\vc_i[j-a]\gets$ the index of $\vz_i[j]$ in $(\bbracket{q}\setminus\{z_{\rm bad}\})$
\ENDFOR
\ENDFOR
\RETURN{$\vc_1\vc_2\cdots\vc_M$}
\end{algorithmic}
\end{algorithm}

\begin{exa} \label{exa:address-encode}For $q=4$ and $a=2$, the address set is $\A^*=\{00,13,22,31\}$ by \eqref{eq:addresses}.
Consider $\ell=5$ and the data string $\vc=(111,123,222,321)$.
Applying Algorithm \ref{encode} to construct $\vz_1$ with $\vc_1=111$, we start with $\vz_1=00$.
Then $z_{\rm bad}=0$ and we choose the first element of $\{1,2,3\}$ to append to $\vz_1$ to get $001$.
In the next iteration, we have $z_{\rm bad}=3$ and append $0$ to $\vz_1$ to get $0010$.
Repeating this, we then obtain $\vz_1=\underline{00}101$. Completing the process for all $i$, we have
\[ \vz_1=\underline{00}101,\
\vz_2=\underline{13}023,\
\vz_3=\underline{22}111,\
\vz_4=\underline{31}210,\]
and so, ${\tt encode}_2(\vc)=(00101,13023,22111,31210)=\vx$.
We check that $\vx$ is indeed $(\A^*,\ell)$-addressable, and
verify that ${\tt decode}_2(\vx)$ in Algorithm \ref{decode} indeed returns the data string $\vc$.
Since there are $3^{12}$ possible data strings, $|\C|=3^{12}\approx 4^{9.51}$.
\end{exa}

Algorithm \ref{encode} bears similarities with a {\em linear feedback shift register} \cite{Golomb:1982}.
The main difference is that we augment our codeword with a symbol that is {\em not equal} 
to the value defined by the linear equation. This then guarantees that we have no $a$-grams belonging to $\A^*$.
More formally, we have the following proposition. 

\begin{prop}\label{prop:constr}Consider the maps ${\tt encode}_2$ and ${\tt decode}_2$  defined by Algorithms \ref{encode} and \ref{decode}
and the code $\C$. Then $\C$ is an $(\A^*,\ell)$-addressable code and 
${\tt decode}_2\circ{\tt encode}_2(\vc)=\vc$ for all $\vc \in \{1,2,\ldots,q-1\}^{(\ell-a)M}$.
Hence, ${\tt encode}_2$ is injective and $|\C|=(q-1)^{M(\ell-a)}$.
Furthermore, ${\tt decode}_2$ and ${\tt encode}_2$ computes their respective strings in $O(qM\ell)$ time.
\end{prop}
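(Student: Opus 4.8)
The plan is to verify three things in sequence: that $\C$ is $(\A^*,\ell)$-addressable, that ${\tt decode}_2\circ{\tt encode}_2$ is the identity map (which yields injectivity of ${\tt encode}_2$ and the claimed size of $\C$), and that each algorithm runs in $O(qM\ell)$ time. For the first, Condition (C1) is immediate from Algorithm \ref{encode}: each block $\vz_i$ is initialised to $\vu_i$ and its first $a$ coordinates are never modified afterwards, so $\vz_i[1,a]=\vu_i$. For Condition (C2), the crucial point is the algebraic invariant enforced by the append step: whenever a symbol $z=\vz_i[j]$ with $a+1\le j\le\ell$ is appended, it is chosen from $\bbracket{q}\setminus\{z_{\rm bad}\}$ with $z_{\rm bad}\equiv-\sum_{s=1}^{a-1}\vz_i[j-s]\pmod q$, so that $\sum_{s=0}^{a-1}\vz_i[j-s]\equiv z-z_{\rm bad}\not\equiv 0\pmod q$. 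Hence for every $a$-gram $\vz_i[j,j+a-1]$ with $2\le j\le\ell-a+1$, which ends at coordinate $j+a-1\ge a+1$, the sum of its coordinates is nonzero modulo $q$, so $\vz_i[j,j+a-1]\notin\A^*$ by \eqref{eq:addresses}. This establishes (C2), so $\C$ is $(\A^*,\ell)$-addressable; combined with Theorem \ref{thm:address} (valid since $2a\le\ell$) it re-confirms that distinct words of $\C$ have distinct profile vectors.

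Next I would show ${\tt decode}_2({\tt encode}_2(\vc))=\vc$ by induction on the coordinate index inside each block. Write $\vx=\vz_1\cdots\vz_M={\tt encode}_2(\vc)$ and fix a block $i$. The first $a$ coordinates of $\vz_i$ are the fixed address $\vu_i$ and carry no data. For $a+1\le j\le\ell$, the decoder recomputes $z_{\rm bad}$ from the coordinates $\vz_i[j-1],\dots,\vz_i[j-a+1]$, all of which are coordinates of the codeword produced by the encoder and hence equal to the values the encoder used at step $j$; note also that $\bbracket{q}\setminus\{z_{\rm bad}\}$ has exactly $q-1$ elements since $z_{\rm bad}\in\bbracket{q}$. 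As the encoder set $\vz_i[j]$ to be the $\vc_i[j-a]$-th element of this ordered set and the decoder returns the index of $\vz_i[j]$ within the same set, the decoder outputs precisely $\vc_i[j-a]$. Ranging over all $j$ and all $i$ recovers $\vc$ completely. Therefore ${\tt encode}_2$ is injective, and since its domain $\{1,2,\ldots,q-1\}^{(\ell-a)M}$ has $(q-1)^{M(\ell-a)}$ elements, $|\C|=(q-1)^{M(\ell-a)}$.

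For the running time, both algorithms sweep through the $M\ell$ coordinate positions of the codeword; at each position one computes $z_{\rm bad}$ and then either selects from, or locates an element in, the $(q-1)$-element set $\bbracket{q}\setminus\{z_{\rm bad}\}$, the latter costing $O(q)$. Maintaining a running sum of the last $a-1$ processed symbols (updated by one addition and one subtraction per step) makes the computation of $z_{\rm bad}$ take $O(1)$ amortised time per position, so the total cost is $O(qM\ell)$ in each direction. I do not expect a genuine obstacle: the only place demanding real attention is the index bookkeeping in the (C2) argument and in matching the encoder's and decoder's recomputations of $z_{\rm bad}$, which is routine once the sliding window of the preceding $a-1$ coordinates is tracked correctly.
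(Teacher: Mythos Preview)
Your proof is correct and fills in precisely the details the paper leaves implicit: the paper does not give a formal proof of this proposition, only the one-sentence remark preceding it that appending a symbol \emph{not equal} to the linear-feedback value forces every non-prefix $a$-gram to have nonzero coordinate sum modulo $q$ and hence lie outside $\A^*$. Your verification of (C1), (C2), the inverse relationship, and the running time (including the sliding-window observation needed to make the $z_{\rm bad}$ computation $O(1)$ amortised) is exactly the intended argument.
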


Since $M=q^{a-1}$, Theorem \ref{thm:address} and Proposition \ref{prop:constr} then yield \eqref{addressing}
for $n=q^{a-1}\ell$ and $2a\le \ell$.
In other words,  for $n=q^{a-1}\ell$ and $2a\le \ell$, we have \[R_q(n,\ell)\ge \left(1-\frac{a}{\ell}\right) \log_q (q-1).\]

We now modify our construction to derive addressable codes for all values of 
$n\le q^{\floor{\ell/2}-1}\ell$. 
Suppose that $m=\floor{n/\ell}$. Choose $a=\ceiling{\log_q m}+1$ so that $m\le q^{a-1}$.
Use a subset $\B^*$ of $\A^*$ of size $m$ for the address set.
A straightforward modification of Algorithm \ref{encode} 
then yields $(\B^*,\ell)$-addressable words of the form
\[\vu_1\underbrace{\circ\circ\cdots\circ}_{\ell-a}
\vu_2\underbrace{\circ\circ\cdots\circ}_{\ell-a}\cdots
\vu_m\underbrace{\circ\circ\cdots\circ}_{\ell-a}
\underbrace{00 \cdots 0}_{n-m\ell}\, .\]
The size of this $(\B^*,\ell)$-addressable code can be computed to be $(q-1)^{m(\ell-a)}$.
We obtain the following corollary.

\begin{cor}\label{cor:address}
For $n\le q^{\floor{\ell/2}-1}\ell$, suppose that $n=m\ell+t$ with $0\le t<\ell$. 
Set $a=\ceiling{\log_q m}+1$ so that $m\le q^{a-1}$ and $2a\le \ell$.
Then $P_q(n,\ell)\ge (q-1)^{m(\ell-a)}$, or,
\[
R_q(n,\ell)\ge \left(\frac{m(\ell-a)}{n}\right)\log_q(q-1)\approx \left(1-\frac{a}{\ell}\right)\log_q(q-1).
\]
\end{cor}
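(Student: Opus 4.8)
The plan is to (i) confirm that the stated choice of $a$ is admissible, (ii) describe the truncated encoder, and (iii) rerun the proof of Theorem~\ref{thm:address} with one additional case to absorb the zero padding. First I would verify $a=\ceiling{\log_q m}+1$: since $a-1=\ceiling{\log_q m}\ge\log_q m$ we get $q^{a-1}\ge m$, so a size-$m$ subset $\B^*\subseteq\A^*$ exists; and since $m=\floor{n/\ell}\le n/\ell\le q^{\floor{\ell/2}-1}$ we have $\log_q m\le\floor{\ell/2}-1$, which — the right-hand side being an integer — forces $\ceiling{\log_q m}\le\floor{\ell/2}-1$, i.e.\ $a\le\floor{\ell/2}$ and $2a\le\ell$. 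This part is routine.

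\emph{Construction.} Next I would list $\B^*=\{\vu_1,\ldots,\vu_m\}$ and run the obvious truncation of Algorithm~\ref{encode}: for $i=1,\ldots,m$ build a length-$\ell$ block $\vz_i$ exactly as there (prefix $\vu_i$, then $\ell-a$ further symbols, each chosen to differ from the value forced by the running linear equation modulo $q$), and finally append $t=n-m\ell<\ell$ zeros to obtain a length-$n$ word. Let $\C$ be the image of this map on $\{1,\ldots,q-1\}^{(\ell-a)m}$. Exactly as in Proposition~\ref{prop:constr}, the first $m\ell$ coordinates of a codeword recover the data string (invert block by block, as Algorithm~\ref{decode} does), so the map is injective and $|\C|=(q-1)^{m(\ell-a)}$. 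Moreover, by the same computation used for Algorithm~\ref{encode}, inside each block $\vz_i$ every internal $a$-gram $\vz_i[j,j+a-1]$ with $2\le j\le\ell-a+1$ has digit-sum $\ne 0\bmod q$; in particular no such $a$-gram lies in $\A^*$, and since $\vz_i[1,a]=\vu_i$ this makes $\C$ an $(\B^*,\ell)$-addressable code (the stronger ``no internal $a$-gram in $\A^*$'' property is what we shall use).

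\emph{Distinct profile vectors.} This is the core of the proof and the main obstacle, because Theorem~\ref{thm:address} is stated only for words of length exactly $M\ell$. Let $\vx=\vz_1\vz_2\cdots\vz_m\,0^t$ and $\vx'=\vz_1'\vz_2'\cdots\vz_m'\,0^t$ be distinct words of $\C$ and pick any $i$ with $\vz_i\ne\vz_i'$; it suffices to show $\vz_i\in\F(\vx,\ell)\setminus\F(\vx',\ell)$. Suppose $\vz_i$ occurred as an $\ell$-gram of $\vx'$ starting at position $p$. If the window $[p,p+\ell-1]$ lies inside the first $m\ell$ coordinates, the proof of Theorem~\ref{thm:address} applies verbatim: $2a\le\ell$ forces an address to occur as an internal $a$-gram of some block, contradicting (C2), unless $p$ is block-aligned, which (distinct blocks having distinct address prefixes) yields $\vz_i=\vz_i'$. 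Otherwise the window protrudes into the zero tail; since that tail has length $t<\ell$, the window cannot begin in it, so $p\le m\ell$ and $p+\ell-1>m\ell$. Using $2a\le\ell$, the prefix $\vu_i$ occupying $[p,p+a-1]$ satisfies exactly one of: if $p+a-1\le m\ell$ then $\vu_i\in\B^*\subseteq\A^*$ is an internal $a$-gram of block $m$, contradicting the strengthened property; if $p+a-1>m\ell$ then the suffix $\vz_i[a,\ell]$ lands entirely on the all-zero tail, so $\vz_i$ ends in at least $a$ zeros and hence $\vz_i[\ell-a+1,\ell]=0^a$ is an internal $a$-gram of digit-sum $0$, again a contradiction. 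The position bookkeeping in this boundary case — where exactly the hypotheses $2a\le\ell$ and ``$0^a$ not an internal $a$-gram of a block'' do the work — is the one genuinely technical point. In all cases we reach a contradiction, so $\vz_i\notin\F(\vx',\ell)$, whence $\F(\vx,\ell)\ne\F(\vx',\ell)$ and a fortiori $\vp(\vx,\ell)\ne\vp(\vx',\ell)$.

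\emph{Rate.} Finally $P_q(n,\ell)\ge|\C|=(q-1)^{m(\ell-a)}$ gives $R_q(n,\ell)\ge\frac{m(\ell-a)}{n}\log_q(q-1)$, and since $n=m\ell+t$ with $0\le t<\ell$ we have $\frac{m(\ell-a)}{n}\ge\frac{m}{m+1}\bigl(1-\frac a\ell\bigr)$, which is close to $1-\frac a\ell$ for the (large) values of $m$ that matter. Apart from the boundary analysis in the previous step, everything is either bookkeeping or a direct transcription of the arguments already given for Theorem~\ref{thm:address} and Proposition~\ref{prop:constr}.
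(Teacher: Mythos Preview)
Your proof is correct and follows the same construction as the paper: take a size-$m$ subset $\B^*\subseteq\A^*$, run the modified encoder block by block, and pad with $n-m\ell$ zeros. The paper's own argument is only a sketch---it simply declares the padded words ``$(\B^*,\ell)$-addressable'' and invokes Theorem~\ref{thm:address}, glossing over the fact that both the definition of addressability and Theorem~\ref{thm:address} are stated for words of length exactly $M\ell$.

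You have filled this gap. Your Case~2 analysis (the window protruding into the zero tail) is the missing piece, and your use of the \emph{strengthened} property---that no internal $a$-gram of any block lies in $\A^*$, not merely in $\B^*$---is exactly what is needed in subcase~(b), since $0^a\in\A^*$ but need not lie in $\B^*$. So your proof is not a different route; it is the paper's route made rigorous.
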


\begin{exa}
Setting $q=4$, $a=5$, and $\ell=100$ in \eqref{addressing} yields $P_4(25600, 100)\ge 3^{24320}\approx 4^{19273}$.
In other words, $R_4(25600, 100)\ge 0.753$. 
Applying Corollary \ref{cor:address} and 
varying $a\in\{2,3,4,5\}$, we have 
\[R_4(100m, 100)\ge 
\begin{cases}
0.776, &\mbox{ for $1< m\le 4$;}\\
0.768, &\mbox{ for $4< m \le 16$;}\\
0.760, &\mbox{ for $16< m\le 64$;}\\
0.752, &\mbox{ for $64< m\le 256$;}
\end{cases}\]
We improve the above lower bound for $R_4(n,100)$ in Section \ref{sec:debruijn}.
Nevertheless, this family of profile vectors has efficient encoding and decoding algorithms. 
We demonstrate a simple assembly algorithm in the next subsection.
\end{exa}

\subsection{Assembly of $(\A,\ell)$-Addressable Words in the Presence of Coverage Errors}


Let $\A$ be a set of addresses, $\C$ be a set of $(\A,\ell)$-addressable words, 
and $\vx\in \C$. 
In this subsection, we present an algorithm that takes the set of reads provided by $\vp(\vx,\ell)$
and correctly assembles $\vx$. 
We also observe that correct assembly is possible even if some reads are lost.

We use the formal definition of the DNA storage channel given by Kiah \etal \cite{Kiah.etal:2015a}
 and reproduce here the notion of {\em coverage errors}.
Suppose that the data of interest is encoded by a vector  $\vx\in \llbracket q\rrbracket^n$ and 
let $\vhp(\vx)$ be the output profile vector of the channel.
Coverage errors occur when not all $\ell$-grams are observed during fragmentation and subsequently sequenced.
For example, suppose that $\vx=10001$ from Example \ref{exa:simple}, and that $\vhp(\vx)$ is the channel output $2$-gram profile vector.
The coverage loss of one $2$-gram results in the count of $00$ in $\vhp(\vx)$ to be one instead of two.

We then define $\B(\vx,e)$ to be the set of all possible output profile vectors arising from at most $e$ coverage errors 
with input vector $\vx$. Then for a code $\C\subseteq \bbracket{q}^n$,
a map $\Phi: \nonneg^{q^\ell}\to \C \cup \{{\rm Fail}\}$ is an {\em assembly algorithm for $\C$ that corrects $e$ coverage errors} if for all $\vx\in \C$,
$\Phi(\vhp)=\vx \mbox{ for all }\vhp\in \B(\vx,e)$.

%
%
%
%
Let $\vhp$ represent  a (possibly incomplete) set of reads obtained from $\vx$.
Our broad strategy of assembly is as follows.
For each read $\vr\in \vhp$, we first attempt to {\em align} $\vr$ by guessing the index $j$ such that  $\vr=\vx[j,j+\ell-1]$.
After which, we ensure that all symbols in $\vx$ are {\em covered} by some correctly aligned read,
so that the entire word $\vx$ can be reconstructed.

We now describe in detail the alignment step.
Let $\vx$ be an $(\A,\ell)$-addressable word. Recall that $\vx$ can be written as
\[
\vx= \underbrace{\overbrace{x_1\cdots x_a}^{\vu_1}\cdots x_{\ell}}_{\vz_1}\cdots
\underbrace{\overbrace{x_{(i-1)\ell+1}\cdots x_{(i-1)\ell+a}}^{\vu_i}\cdots x_{i\ell}}_{\vz_i}\cdots
\underbrace{\overbrace{x_{(M-1)\ell+1}\cdots x_{(M-1)\ell+a}}^{\vu_M}\cdots x_{M\ell}}_{\vz_M} \, ,
\]
so that $\vx[(i-1)\ell+1,(i-1)\ell+a]=\vu_i$ for all $1\le i\le M$.

Let $\vr$ be a read obtained from $\vx$ and 
we say that $\vr$ is {\em correctly aligned at $j$} for some $1\le j\le n-\ell+1$
if $\vx[j,j+\ell-1]=\vr$.
To align the read $\vr$, our plan is to look for the address 
that occurs last in $\vr$, say $\vu_i$,
and match it to the corresponding index $(i-1)\ell+1$.
Specifically, suppose that $\vr$ is a read that contains an address as a substring. 
Let $t$ be the largest index such that $\vr[t,t+a-1]=\vu_i\in \A$. 
Then we align $\vr$ in a way such that $\vr[t,t+a-1]$ matches the address $\vu_i$ (see Figure \ref{fig:align}(a)).

However, not all reads can be aligned correctly. 
To remedy the situation, we define a special type of read that can be proven to be always aligned correctly.




\begin{defn}\label{def:typei}

Let $\vr\in\bbracket{q}^\ell$ be an $\ell$-gram or read.
Define $L(\vr)$ to be the the largest index $t$ of $\vr$ such that $\vr[t,t+a-1] \in \A$. 
If such a $t$ does not exist, then we set $L(\vr)=\infty$.
We say that $\vr$ is a {\em Type I read} if $L(\vr)\le \ell-2a+2$.
\end{defn}

\begin{exa} \label{exa:typei}In Definition \ref{def:typei}, observe that we can characterize a Type I read 
without knowing $\vx$. This is illustrated in the examples below.

\begin{enumerate}
\item Consider the set of parameters in Example \ref{exa:address-encode} 
with $q=4$, $a=2$, $\A^*=\{00,13,22,31\}$, and $\ell=5$.
Consider further the $(\A^*,\ell)$-addressable word  $\vx=(00101, 13023, 22111, 31210)$.
The 5-gram $01130$ is a Type I read, since $L(01\underline{13}0)=3\le 3$
On the other hand, the 5-grams $30232$ and $21113$ are {\em not} Type I reads,
since $L(30232)=\infty$ and $L(211\underline{13})=4> 3$.


\item Consider $q=4$, $a=3$, $\ell=8$, and $\A^*$ given by \eqref{eq:addresses}.
The 8-gram $101\underline{220}01$ is a Type I read, while the 8-grams $32122133$ and $2130\underline{130}3$ are {\em not} Type I reads. Note that $L(21301303)=5$, since $130$ is the last $3$-gram even though 
$301$ and $013$ also occur as $3$-grams of the read $21301303$.
\end{enumerate}

\end{exa}
 

Now, our alignment method may fail when a read is {\em not} a Type I read.
For example, if we consider $\vx$ in Example \ref{exa:typei}(i) and the read $\vr=211\underline{13}$, 
our method aligns $\vr$ to index $3$, which is incorrect. 
In contrast, if a read is of Type I, we prove that the alignment is always correct.

\begin{figure}[!t]
\begin{enumerate}[(a)]
\item \hfill
\begin{center}
\includegraphics[width=16cm]{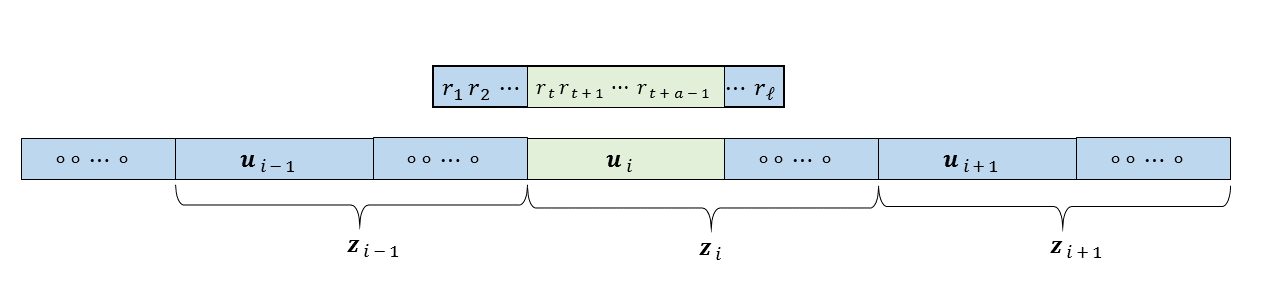}
\end{center}
\item \hfill
\begin{center}
\includegraphics[width=16cm]{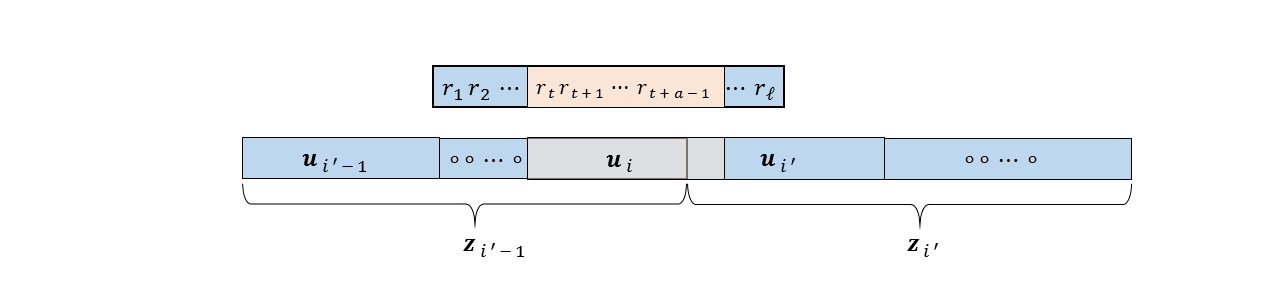}
\end{center}
\end{enumerate}

\caption{Possible ways of obtaining a read containing the address $\vu_i$. 
(a) The string $\vu_i$ is the prefix of the $i$th component $\vz_i$ of $\vx$. 
(b) The string $\vu_i$ has an overlap with the prefix of the $i'$th component $\vz_{i'}$ of $\vx$.   }
\label{fig:align}
\end{figure}

\begin{lem}\label{lem:align} Let $\vr$ be a Type I read of an $(\A,\ell)$-addressable word $\vx$. 
If $L(\vr)=t$ and $\vr[t,t+a-1]=\vu_i$, 
then $\vr$ is correctly aligned at $j=(i-1)\ell-t+2$.
\end{lem}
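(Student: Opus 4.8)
The plan is to show that the address $\vu_i$ identified at position $t=L(\vr)$ in the Type I read $\vr$ must in fact be the \emph{prefix} of some block $\vz_i$ of $\vx$, rather than an address that straddles a block boundary or arises accidentally from the non-address portion. Once that is established, writing $\vr=\vx[j,j+\ell-1]$ and matching $\vr[t,t+a-1]=\vu_i$ to $\vx[(i-1)\ell+1,(i-1)\ell+a]$ forces $j+t-1=(i-1)\ell+1$, i.e. $j=(i-1)\ell-t+2$, which is exactly the claimed alignment.

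First I would fix $j$ with $\vr=\vx[j,j+\ell-1]$ (which exists since $\vr$ is a read of $\vx$), and let $t=L(\vr)$, so $\vr[t,t+a-1]=\vu_i$ for some $i$; translating to $\vx$, this says $\vx[j+t-1,\,j+t+a-2]=\vu_i\in\A$. Now I examine which $a$-grams of $\vx$ can equal an address: by Condition (C2), within any single block $\vz_s$ the only $a$-gram lying in $\A$ is the prefix $\vz_s[1,a]=\vu_s$. So an occurrence of an address in $\vx$ is either such a prefix, or it spans the boundary between two consecutive blocks $\vz_s,\vz_{s+1}$ (using Figure \ref{fig:align}(b)'s picture). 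The key observation is then: because $\vr$ is Type I, $t\le \ell-2a+2$, so the address $\vu_i$ sits within the first $\ell-2a+2$ positions of $\vr$, leaving at least $2a-1$ further positions of $\vr$ after the start of $\vu_i$; I would use this slack, together with the fact that $\vr$ has length exactly $\ell$ equal to the block length, to rule out the ``boundary-straddling'' case and any accidental second address.

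The main step is the boundary argument. Suppose the occurrence $\vx[j+t-1,j+t+a-2]=\vu_i$ were \emph{not} a block prefix. Then, as in the proof of Theorem \ref{thm:address}, it straddles the boundary before some block $\vz_{i'}$: part of $\vu_i$ coincides with a suffix of $\vz_{i'-1}$ and the rest overlaps the prefix $\vu_{i'}$ of $\vz_{i'}$. But then the true block prefix $\vu_{i'}=\vx[(i'-1)\ell+1,(i'-1)\ell+a]$ starts at an $\vx$-position strictly larger than $j+t-1$ but still at most $j+t-1+(a-1) < j+\ell-a+1$ — the last inequality using $t\le\ell-2a+2$ — so $\vu_{i'}$ lies entirely inside the window $\vx[j,j+\ell-1]=\vr$, appearing at a read-index strictly larger than $t$. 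This contradicts the maximality of $t=L(\vr)$. Hence $\vu_i$ must be a block prefix, say of $\vz_i$, giving $j+t-1=(i-1)\ell+1$ and the claimed value of $j$. The one case still to dispose of is that $\vu_i$ could be the prefix of $\vz_{i'}$ for some $i'\neq i$ with $\vu_{i'}=\vu_i$ numerically but different index — this cannot happen since the addresses $\vu_1,\dots,\vu_M$ are distinct, so the index $i$ is determined. I expect the fiddly part to be bookkeeping the inequalities on positions (the interplay of $t\le\ell-2a+2$, $a$, and $\ell$) carefully enough to guarantee the ``shifted'' address $\vu_{i'}$ really fits inside $\vr$; everything else follows the template of Theorem \ref{thm:address}.
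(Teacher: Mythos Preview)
Your proposal is correct and follows essentially the same argument as the paper's proof: you classify the occurrence of $\vu_i$ in $\vx$ as either a block prefix or a boundary-straddler, and in the latter case use the Type~I bound $t\le\ell-2a+2$ to show the adjacent block's prefix $\vu_{i'}$ fits entirely inside $\vr$ at a later index, contradicting the maximality of $t=L(\vr)$. Your bookkeeping of the position inequalities is a touch more explicit than the paper's, but the structure and key idea are identical.
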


\begin{proof}
It suffices to show that if $\vx[j',j'+\ell-1]=\vr$ for some index $j'$, then $j'$ is necessarily equal to $j$.

Now, $\vr[t,t+a-1]=\vu_i$, or equivalently, $\vx[j'+t-1,j'+t+a-2]=\vu_i$. 
From the definition of an $(\A,\ell)$-addressable word,
$\vu_i$ cannot be properly contained in any of the substrings $\vz_1,\vz_2,\ldots,\vz_M$, with the exception of $\vz_i$, where $\vu_i$ is the prefix. 
Therefore, we only have the following two possibilities for the index of $j'+t-1$ (see Figure \ref{fig:align}).
\begin{enumerate}[(a)]
\item $\vu_i$ is a prefix of $\vz_i$. Then $j'+t-1=(i-1)\ell+1$, implying that $j'=(i-1)\ell-t+2$.
\item The substring $\vr[t,t+a-1]=\vx[j'+t-1,j'+t+a-2]=\vu_i$ has an overlap with some address $\vu_{i'}\in \A$.
Specifically, $j'+t-1 \pmod \ell \in \{0,\ell-a+2,\ell-a+3,\ldots,\ell-1\}$. 
Since $\vr$ is a Type I read, we have that $t\le \ell-2a+2$. 
Hence, there are at least $a-1$ symbols in $\vr$ after the substring $\vu_i$.
Since $\vu_i$ has some overlap with $\vu_{i'}$, the substring $\vu_{i'}$ is necessarily contained in $\vr$. 
This contradicts the maximality of $t$.
\end{enumerate}
In conclusion, the only possibility is $j'=j$.
\end{proof}

It remains to show that all symbols in $\vx$ is covered by a Type I read.
More formally, let $1\le i\le n$ and we say that a read $\vr=\vx[j,j+\ell-1]$ {\em covers the symbol $\vx[i]$} if $j\le i\le j+\ell-1$.
The following lemma characterizes Type I reads.

\begin{lem}
Let $1\le j\le n-\ell+1$.
The $\ell$-gram $\vx[j,j+\ell-1]$ is a Type I read if and only if 
$j\pmod \ell \not\in \{2,3,\ldots, 2a-1\}$.
\end{lem}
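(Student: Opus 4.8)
The plan is to analyze exactly where an address $\vu_{i'}\in\A$ can occur as an $a$-gram of an $(\A,\ell)$-addressable word $\vx$, and then use Definition~\ref{def:typei} together with Lemma~\ref{lem:align}'s setup to pin down which starting positions $j$ yield Type~I reads. The key observation is that, by Conditions (C1) and (C2), within each block $\vz_{i'}=\vx[(i'-1)\ell+1,i'\ell]$ the only $a$-gram belonging to $\A$ is the prefix $\vu_{i'}$ itself. Consequently, an $a$-gram of $\vx$ that lies in $\A$ must either be such a block prefix (starting at a position $\equiv 1 \pmod \ell$) or straddle the boundary between two consecutive blocks $\vz_{i'-1}$ and $\vz_{i'}$ in such a way that it coincides with $\vu_{i'}$ only if it actually starts at the boundary — but a straddling $a$-gram could a priori also happen to land in $\A$ by coincidence. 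Wait — that coincidental case is exactly what (C2) does \emph{not} forbid, so I need to handle it; the cleanest route is to note that whatever $a$-grams of $\vx$ happen to lie in $\A$, the \emph{last} such occurrence inside any window is what matters, and block prefixes always occur.

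First I would fix the read $\vr=\vx[j,j+\ell-1]$ and write $j = c\ell + s$ with $1\le s\le \ell$ (i.e., $s = j \bmod \ell$ in the representative range $\{1,\dots,\ell\}$), so that $\vr$ starts $s-1$ symbols into block $\vz_{c+1}$. The window $[j,j+\ell-1]$ has length exactly $\ell$, so it meets the block structure in a predictable way: it contains the suffix of $\vz_{c+1}$ of length $\ell-s+1$ and the prefix of $\vz_{c+2}$ of length $s-1$. The address prefix $\vu_{c+2}$ of block $\vz_{c+2}$ occupies positions $(c+1)\ell+1,\dots,(c+1)\ell+a$, i.e., positions $\ell-s+2,\dots,\ell-s+a+1$ within $\vr$ (when these are $\le \ell$, that is when $s-1\ge a$, i.e. $s\ge a+1$). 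So $\vr$ contains a full copy of an address exactly at local index $t_0 = \ell-s+2$ whenever $s\ge a+1$. Second, I would argue this is the \emph{last} $a$-gram of $\vr$ lying in $\A$: any $a$-gram starting strictly after $t_0$ lies entirely inside the prefix region of $\vz_{c+2}$ past its address, hence past position $(c+1)\ell+a$, so by (C2) it is not in $\A$; and any $a$-gram starting at $t_0$ but not aligned to the block boundary — there is none, $t_0$ \emph{is} the boundary. Thus when $s\ge a+1$ we get $L(\vr)=\ell-s+2$ (or possibly $L(\vr)$ could be even larger only if $s<a+1$... but that's the other case), so $L(\vr)\le\ell-2a+2 \iff \ell-s+2\le\ell-2a+2 \iff s\ge 2a$, giving Type~I precisely for $s\in\{a+1,\dots,\ell\}\setminus\{a+1,\dots,2a-1\}=\{2a,\dots,\ell\}\cup(\text{whatever survives})$.

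Third, I would handle the case $s\le a$ (equivalently $j\bmod\ell\le a$; note $s=\ell$ means $j\equiv 0$, already covered above), where $\vr$ starts inside the address $\vu_{c+1}$ of its own block or just after it. Here $\vr$ does not contain the \emph{next} block prefix in full, so the relevant $a$-grams in $\A$ are: possibly a tail-end coincidental straddler near the end of $\vr$ (starting after position $\ell-a+1$ locally — but such can only be a coincidental $a$-gram crossing into $\vz_{c+2}$, whose start index locally is $\ge \ell-2s+3$... ) — this subcase requires care. The more robust formulation: whenever $2\le s\le 2a-1$, the prefix $\vu_{c+1}$ occurs in $\vr$ at a local index $t$ with $2\le t=a-s+2\le a$ when $s\le a$, hmm, actually the cleanest is: for $s\le a$, $\vr$ begins partway through $\vu_{c+1}$ so $\vu_{c+1}$ is \emph{not} fully present as a prefix, but for $s$ in the mid-range the address $\vu_{c+1}$ sits fully inside $\vr$ starting at local index $\ell$-ish... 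I will instead directly characterize: the last occurrence of an $\A$-member in $\vr$ is the prefix of whichever block $\vz_{c+2}$ (if its address fits, i.e. $s\ge a+1$) or else it is the prefix of $\vz_{c+1}$ itself when that prefix lies fully inside $\vr$, which happens when $s\le 2$ — and a short finite check of the boundary values $s=2,\dots,2a-1$ shows $L(\vr)>\ell-2a+2$ in each, while $s=1$ or $s\ge 2a$ gives $L(\vr)\le\ell-2a+2$. I expect the main obstacle to be cleanly ruling out \emph{coincidental} $a$-grams (straddling the block boundary, not equal to any intended address prefix) that happen to lie in $\A$ and would spoil the value of $L(\vr)$: the resolution is that any such straddler near the \emph{start} of $\vr$ only makes $L(\vr)$ larger and is dominated by the genuine block-prefix occurrence later in $\vr$, while near the \emph{end} of $\vr$ a straddler would force a full address $\vu_{c+2}$ to appear after it (since the overlap with $\vz_{c+2}$'s prefix is short), contradicting maximality — exactly the argument already used in the proof of Lemma~\ref{lem:align}(b). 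Once that is dispatched, the equivalence $j\bmod\ell\notin\{2,3,\ldots,2a-1\} \iff \vx[j,j+\ell-1]$ is Type~I follows by collecting the cases.
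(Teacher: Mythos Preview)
The paper does not supply a proof for this lemma; it is simply stated, with Corollary~\ref{cor:cover} then said to follow by ``a straightforward computation.'' So there is nothing in the paper to compare your approach against, and I will assess the proposal on its own merits.

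Your overall decomposition is correct, and the case $s\ge a+1$ is handled well: the address $\vu_{c+2}$ sits at local index $t_0=\ell-s+2$, and every local index $t>t_0$ (up to $\ell-a+1$) yields an $a$-gram lying entirely inside $\vz_{c+2}$ starting at an internal position $\ge 2$, hence excluded from~$\A$ by (C2). Thus $L(\vr)=\ell-s+2$, and the Type~I condition reduces to $s\ge 2a$. The case $s=1$ (where $\vr=\vz_{c+1}$ and $L(\vr)=1$) is also fine.

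The gap is the range $2\le s\le a$. Your treatment there meanders and ultimately points in the wrong direction: you try to \emph{rule out} boundary-straddling $a$-grams by invoking the mechanism of Lemma~\ref{lem:align}(b), but that is neither possible (Conditions (C1)--(C2) say nothing about $a$-grams that cross a block boundary, so one of them may coincidentally lie in $\A$) nor necessary. The clean argument is purely positional. For $2\le s\le a$, neither $\vu_{c+1}$ nor $\vu_{c+2}$ lies entirely inside $\vr$. Every $a$-gram of $\vr$ at a local index $t\le \ell-s-a+2$ lies entirely inside $\vz_{c+1}$ starting at internal position $s+t-1\ge 2$, hence is excluded from $\A$ by (C2). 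The only remaining valid local indices are $t\in[\ell-s-a+3,\ \ell-a+1]$, and since $s\le a$ every such $t$ satisfies $t\ge\ell-2a+3>\ell-2a+2$. Therefore either no $a$-gram of $\vr$ lies in $\A$ (so $L(\vr)=\infty$), or $L(\vr)\ge\ell-2a+3$. In either case $\vr$ fails to be Type~I, irrespective of which straddlers happen to land in $\A$. Once you plug this argument in for $2\le s\le a$, your case analysis is complete and yields exactly the claimed equivalence.
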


The next corollary then follows from a straightforward computation.

\begin{cor}\label{cor:cover}
Let $\vx$ be a $q$-ary word of length $n$.
Each symbol in $\vx$ is covered by at least one Type I read.
In particular, if $\ell\le i\le n-\ell$,  then $\vx[i]$ is covered by exactly $\ell-2a+2$ Type I reads. 
\end{cor}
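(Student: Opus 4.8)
The plan is to first establish the preceding lemma characterizing Type I reads by position, and then read off the corollary by counting. Recall that $\vx$ is partitioned into blocks $\vz_1\vz_2\cdots\vz_M$ of length $\ell$, with $\vz_i[1,a]=\vu_i\in\A$ and no other $a$-gram of any $\vz_i$ lying in $\A$. Hence, for an $\ell$-gram $\vr=\vx[j,j+\ell-1]$, the only positions inside $\vr$ at which an address can start are those aligned with some block boundary, i.e. indices $t$ with $j+t-1\equiv 1\pmod{\ell}$, equivalently $t\equiv 2-j\pmod{\ell}$. Among $t\in[1,\ell]$ there is exactly one such value (since $\vr$ spans exactly one full period of block boundaries), call it $t_0$; and $L(\vr)=t_0$ unless $t_0$ is so large that the address started at $t_0$ would not fit, but since $\vu_i$ has length $a\le \ell$ and $\vr$ has length $\ell$, a boundary address always fits when $t_0\le \ell-a+1$, and when $t_0>\ell-a+1$ we instead pick up the boundary address from the previous block, which sits at $t_0-\ell\le 0$ — outside $\vr$ — so $L(\vr)=\infty$ in that case. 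Writing $j\equiv c \pmod \ell$, one computes $t_0 = ((2-c)\bmod \ell)$, and the condition $L(\vr)\le \ell-2a+2$ translates into $c\pmod\ell \notin\{2,3,\ldots,2a-1\}$, which is the claim of the Lemma. I would work this congruence bookkeeping carefully, being mindful of the wrap-around cases near $j\equiv 1$ and near the end of $\vx$ where fewer than $M$ full blocks remain; this case analysis at the boundary is the main place an off-by-one could sneak in.

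Given the Lemma, the corollary is a counting statement. Fix a symbol index $i$ with $1\le i\le n$. The reads covering $\vx[i]$ are exactly those $\vr=\vx[j,j+\ell-1]$ with $\max(1,i-\ell+1)\le j\le \min(i,n-\ell+1)$. I would argue that in the "interior" range $\ell\le i\le n-\ell$, this set of starting indices $j$ is a full interval of length $\ell$, namely $j\in[i-\ell+1,\,i]$, and as $j$ ranges over any $\ell$ consecutive integers, $j\bmod\ell$ takes each residue in $\{0,1,\ldots,\ell-1\}$ exactly once. By the Lemma, the Type I reads among these correspond to the residues \emph{not} in $\{2,3,\ldots,2a-1\}$, a forbidden set of size $2a-2$, so exactly $\ell-(2a-2)=\ell-2a+2$ of the covering reads are Type I. Since $2a\le\ell$ we have $\ell-2a+2\ge 2\ge 1$, so in particular at least one Type I read covers $\vx[i]$.

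For the remaining "boundary" symbols — $1\le i<\ell$ or $n-\ell<i\le n$ — the covering window of starting indices is shorter than $\ell$, so the clean "each residue once" count no longer applies, and one must check existence by hand. Here I would simply exhibit a Type I read: for small $i$ (say $i<\ell$), the read starting at $j=1$ covers $\vx[i]$, and $j=1$ has residue $1\notin\{2,\ldots,2a-1\}$, so it is Type I; for large $i$ (say $i>n-\ell$), take the last read $j=n-\ell+1$, which covers $\vx[i]$ — one checks from $n=q^{a-1}\ell$ (so $\ell\mid n$) that $n-\ell+1\equiv 1\pmod\ell$, again a Type I residue. Thus every symbol is covered by at least one Type I read, completing the proof. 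The only real obstacle is getting the residue arithmetic in the Lemma exactly right, including the convention $L(\vr)=\infty$; once that is pinned down, the corollary follows from the elementary fact that $\ell$ consecutive integers hit every residue mod $\ell$ once.
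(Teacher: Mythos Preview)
Your approach matches the paper's: the paper states the preceding Lemma (without proof) and says the corollary ``follows from a straightforward computation,'' which is precisely your counting argument --- for $\ell\le i\le n-\ell$ the covering reads have starting index $j$ running over the full window $[i-\ell+1,i]$ of length $\ell$, these $j$ hit every residue modulo $\ell$ exactly once, and by the Lemma exactly $2a-2$ of those residues are excluded, leaving $\ell-2a+2$ Type~I reads. Your boundary argument (exhibit $j=1$ and $j=n-\ell+1\equiv 1\pmod\ell$, using $\ell\mid n=M\ell$) is also correct.

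One genuine inaccuracy in your Lemma sketch, however: it is \emph{not} true that ``the only positions inside $\vr$ at which an address can start are those aligned with some block boundary.'' Conditions (C1)--(C2) forbid addresses as $a$-grams lying properly \emph{inside} a single block $\vz_{i'}$, but an address may well occur as an $a$-gram straddling the boundary between $\vz_{i'}$ and $\vz_{i'+1}$ --- this is exactly case~(b) in the proof of Lemma~\ref{lem:align}. Fortunately this does not break your conclusion: any such straddling occurrence starts strictly before the boundary position $t_0$, so when $t_0\le\ell-a+1$ you still get $L(\vr)=t_0$; and when $t_0>\ell-a+1$ (your ``$L(\vr)=\infty$'' case, residues $c\in\{2,\ldots,a\}$), a straddling address, if present, sits at some $t\ge t_0-a+1\ge\ell-2a+3>\ell-2a+2$, so the read is still not Type~I. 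Just be aware that your assertion ``$L(\vr)=\infty$'' in that case can be false, even though the Type~I classification you derive from it is correct.
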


Lemma \ref{lem:align} and Corollary \ref{cor:cover} imply the correctness of our assembly algorithm for $\C$,
provided that no reads are lost. 
In order for the assembly algorithm to correct coverage errors,
we simply fix the values of the first $\ell$ and the last $\ell$ symbols of all codewords.
Then each of the other $n-2\ell$ symbols is covered by exactly $\ell-2a+2$ Type I reads.
Therefore, if at most $\ell-2a+1$ reads are lost, we are guaranteed that all of these $n-2\ell$ symbols 
are covered by at least one  Type I read.
More formally, we have the following theorem.

\begin{thm}\label{thm:address-assemble}
Let $\A=\{\vu_1,\vu_2, \ldots, \vu_M\}$ be a set of addresses. Suppose that
 $\vz_1^*, \vz_M^*\in \bbracket{q}^\ell$ obey the following properties:
 \begin{enumerate}[(1)]
\item $\vz_1^*[1,a]=\vu_1$ and $\vz_M^*[1,a]=\vu_M$;
\item $\vz_1^*[j,j+a-1]^*\notin\A$ and $\vz_M^*[j,j+a-1]\notin\A$ for all $2\le j\le \ell-a+1$.  
\end{enumerate}

\noindent Let $\C^*$ be a set of $(\A,\ell)$-addressable words 
such that $\vx[1,\ell]=\vz_1^*$ and $\vx[M\ell-\ell+1,M\ell]=\vz_M^*$ for all $\vx\in\C^*$.
Then there exists an assembly algorithm for $\C^*$ that corrects $\ell-2a+1$ coverage errors.
\end{thm}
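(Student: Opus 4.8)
The plan is to combine the two ingredients already assembled in this section: the alignment guarantee of Lemma~\ref{lem:align} and the covering guarantee of Corollary~\ref{cor:cover}, upgraded to be robust against up to $\ell-2a+1$ missing reads. I would first describe the assembly map $\Phi$ explicitly. Given an input profile vector $\vhp$ (a multiset of reads), the algorithm does the following: initialize an output word of length $n=M\ell$ whose first $\ell$ symbols are $\vz_1^*$ and whose last $\ell$ symbols are $\vz_M^*$ (these are fixed for every codeword in $\C^*$, so they are known a priori). Then, for each read $\vr$ in $\vhp$, test whether $\vr$ is a Type~I read, which by Definition~\ref{def:typei} depends only on $\vr$ itself (compute $L(\vr)$ and check $L(\vr)\le \ell-2a+2$); if it is, compute $i$ and $t$ from $L(\vr)=t$, $\vr[t,t+a-1]=\vu_i$, align $\vr$ at $j=(i-1)\ell-t+2$, and write the symbols of $\vr$ into positions $j,j+1,\dots,j+\ell-1$ of the output word. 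Non--Type~I reads are simply ignored. Finally, if every position of the output word has been written, return it; otherwise return $\mathrm{Fail}$.

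\textbf{Correctness when no reads are lost.} Here I would invoke Lemma~\ref{lem:align}: every Type~I read is aligned at its true position, so no position of the output word is ever written with a wrong symbol, and overlapping Type~I reads agree on overlaps. By Corollary~\ref{cor:cover}, when $\vhp$ is the full profile vector every symbol $\vx[i]$ is covered by at least one Type~I read, so every position gets written and $\Phi(\vhp)=\vx$. This handles the $e=0$ base case and confirms $\Phi$ is well-defined as a map into $\C^*\cup\{\mathrm{Fail}\}$.

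\textbf{Robustness to $\ell-2a+1$ coverage errors.} This is the crux. Fixing the first and last $\ell$ symbols means positions $1,\dots,\ell$ and $n-\ell+1,\dots,n$ are recovered for free regardless of which reads survive. For an interior position $i$ with $\ell+1\le i\le n-\ell$, I would apply the Lemma characterizing Type~I reads (the one stating $\vx[j,j+\ell-1]$ is Type~I iff $j\bmod\ell\notin\{2,\dots,2a-1\}$) together with the counting in Corollary~\ref{cor:cover} to conclude that $\vx[i]$ is covered by exactly $\ell-2a+2$ distinct Type~I reads — distinct because they start at distinct indices $j$, hence occupy distinct slots in the profile vector. (I should double-check the edge positions $i\in\{\ell,n-\ell\}$, but these are covered by the fixed prefix/suffix anyway, so it suffices to treat $\ell+1\le i\le n-\ell-1$, say, and note the boundary cases reduce to the fixed blocks.) Since at most $\ell-2a+1$ reads are lost, at least one of these $\ell-2a+2$ Type~I reads survives in $\vhp$; by Lemma~\ref{lem:align} it is aligned correctly and writes the true symbol $\vx[i]$. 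Thus every position is correctly written, $\Phi$ does not fail, and $\Phi(\vhp)=\vx$ for all $\vhp\in\B(\vx,\ell-2a+1)$, which is exactly the definition of an assembly algorithm correcting $\ell-2a+1$ coverage errors.

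\textbf{Main obstacle.} The delicate point is the distinctness and exact count of the Type~I reads covering a given interior symbol: one must be sure that ``covered by $\ell-2a+2$ Type~I reads'' really means $\ell-2a+2$ \emph{different entries} of the profile vector, so that losing $\ell-2a+1$ of them cannot wipe out all of them. This follows because reads starting at different positions are different $\ell$-grams whenever the codeword structure forbids long repeats — which the addressability conditions (C1)--(C2) guarantee, since distinct positions give distinct patterns of address occurrences. I would make this explicit rather than leaning on the informal ``straightforward computation'' remark preceding Corollary~\ref{cor:cover}. The rest is bookkeeping about the fixed boundary blocks.
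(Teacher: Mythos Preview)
Your proposal is correct and takes essentially the same approach as the paper: the paper's argument (given in the paragraph immediately preceding the theorem statement) is precisely that fixing the first and last $\ell$ symbols handles the boundary positions, while Corollary~\ref{cor:cover} guarantees $\ell-2a+2$ Type~I reads through every interior symbol, so $\ell-2a+1$ losses leave at least one survivor, which Lemma~\ref{lem:align} then aligns correctly.

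One remark on your ``main obstacle'': the distinctness you worry about is already settled by Lemma~\ref{lem:align} itself, and you do not need a separate appeal to (C1)--(C2). The proof of that lemma shows that if a Type~I read $\vr$ occurs as $\vx[j',j'+\ell-1]$ for \emph{any} index $j'$, then necessarily $j'=j$; in other words, a Type~I string can occur at only one position of $\vx$. Hence the $\ell-2a+2$ Type~I reads covering an interior symbol, starting at $\ell-2a+2$ distinct indices, are automatically $\ell-2a+2$ distinct $\ell$-grams (indeed, distinct entries of the profile vector), and no further argument is required.
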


\begin{rem}With suitable modifications to code $\C^*$ given in Theorem \ref{thm:address-assemble},
we obtain addressable codes that correct more than $\ell-2a+1$ coverage errors. 
We sketch the main ideas as they follow from the well-known concept of code concatenation \cite{Forney:1966} 
and \cite[Section 5.5]{Huffman:2010}.
More concretely, we assume two codes: 
an {\em outer} code $\D_{\rm out}$ of length $M-2$ over an alphabet $\Sigma$, and 
an {\em inner} code $\D_{\rm in}$ of length $\ell-a$ over the alphabet $\bbracket{q}$, 
that satisfies the following conditions.
\begin{enumerate}[(D1)]
\item $|\Sigma|\le |\D_{\rm in}|$, and so, there exists an injective map $\chi:\Sigma\to \D_{\rm in}$.
\item Recall the definition of $\vz_1^*$ and $\vz_M^*$ in Theorem \ref{thm:address-assemble}. 
Define $\D^*$ to be the $q$-ary code of length $M\ell$ given by 
\[\D^*\triangleq \left\{\vz_1^*\vz_2\cdots\vz_{M-1}\vz_M^*: \vz_j=\vu_j\chi(d_j), 2\le j\le M-1, 
(d_2,\ldots, d_{M-1})\in \D_{\rm out}\right\}.\]
We require $\D^*$ to be $(\A,\ell)$-addressable.
\end{enumerate}
Therefore, if the outer code $\D_{\rm out}$ and inner code $\D_{\rm in}$ are able to correct up to $(D-1)$ and $(d-1)$ erasures, respectively, 
then the assembly algorithm for $\D^*$ is able to correct $dD(\ell-2a+2)-1$ coverage errors. 
\end{rem}


\section{Distinct Profile Vectors from Partial de Bruijn Sequences}
\label{sec:debruijn}

We borrow classical results on de Bruijn sequences and certain results from Maurer~\cite{Maurer92}
to provide detailed proofs for \eqref{debruijn} and Corollary \ref{cor:rates}. 
We first define partial $\ell$-de Bruijn sequences.

\begin{defn}
A $q$-ary word $\vx$ is a {\em partial $\ell$-de Bruijn sequence} if every $\ell$-gram of $\vx$ appears at most once.
In other words, $p(\vx,\vz)\le 1$.
A partial $\ell$-de Bruijn sequence is {\em complete $\ell$-de Bruijn} if every $\ell$-gram of $\vx$ appears exactly once.
\end{defn}

By definition, a complete $\ell$-de Bruijn sequence has length $q^\ell+\ell-1$.
The number of distinct complete $\ell$-de Bruijn sequences was explicitly determined by van Arden-Ehrenfest
and de Bruijn \cite{AEB87} as a special case of a more general result on the number of trees 
in certain graphs. They built upon a previous work done by Tutte and Smith (see the note at the end of~\cite{AEB87}). Their combined effort led to a formula of counting the number of trees in a directed graph as the value of a certain determinant. The formula is now known as BEST Theorem. The acronym refers to the four surnames, namely de {\bf B}ruijn, {\bf E}hrenfest, {\bf S}mith, and {\bf T}utte.
\begin{thm}[BEST\cite{AEB87}]\label{thm:best}
The number of distinct complete $\ell$-de Bruijn words%
 is $(q!)^{q^{\ell-1}}$.
\end{thm}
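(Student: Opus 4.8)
The plan is to prove the BEST Theorem formula $(q!)^{q^{\ell-1}}$ by reducing the enumeration of complete $\ell$-de Bruijn sequences to an enumeration of Eulerian circuits in the de Bruijn graph, and then applying the BEST Theorem on Eulerian circuit counts together with a determinant (Matrix-Tree) computation. First I would recall the standard construction of the de Bruijn graph $G_{q,\ell-1}$: vertices are the $q^{\ell-1}$ words in $\bbracket{q}^{\ell-1}$, and each of the $q^\ell$ words $\vz\in\bbracket{q}^\ell$ gives a directed edge from $\vz[1,\ell-1]$ to $\vz[2,\ell]$. A complete $\ell$-de Bruijn sequence, being a cyclic word of length $q^\ell$ in which every $\ell$-gram appears exactly once, corresponds precisely to an Eulerian circuit in $G_{q,\ell-1}$ (up to a choice of starting edge). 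Care must be taken with the ``cyclic vs. linear'' convention — the sequence of length $q^\ell+\ell-1$ in the statement is the linear unrolling obtained by fixing a starting vertex and appending the last $\ell-1$ symbols — so I would note that the $q^{\ell-1}$ choices of starting vertex are exactly cancelled by the $q^{\ell-1}$ edges out of which an Eulerian circuit can start, leaving the count of Eulerian circuits (as edge-cyclic sequences) equal to the count of de Bruijn sequences.

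Next I would invoke the BEST Theorem in its general form: in a connected Eulerian digraph $G$, the number of Eulerian circuits equals $t_w(G)\cdot\prod_{v}\bigl(\deg^+(v)-1\bigr)!$, where $t_w(G)$ is the number of arborescences rooted at any fixed vertex $w$ (independent of $w$ by Eulerianness). For $G_{q,\ell-1}$ every vertex has out-degree $q$, so $\prod_v(\deg^+(v)-1)! = (q!/q)^{q^{\ell-1}} = ((q-1)!)^{q^{\ell-1}}$ — wait, one must be careful: $(\deg^+(v)-1)! = (q-1)!$, so this product is $((q-1)!)^{q^{\ell-1}}$. It then remains to show the number of rooted arborescences of $G_{q,\ell-1}$ is $q^{q^{\ell-1}-1}\cdot$ (something) so that the product telescopes to $(q!)^{q^{\ell-1}} = (q\cdot(q-1)!)^{q^{\ell-1}}$; concretely, one needs $t_w(G_{q,\ell-1}) = q^{q^{\ell-1}-1}$. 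I would establish this arborescence count via the Matrix-Tree Theorem for digraphs: $t_w(G)$ equals any cofactor of the Laplacian $L = D^{\mathrm{out}} - A$ of $G_{q,\ell-1}$, i.e. the determinant of $L$ with the row and column indexed by $w$ deleted. The key structural fact making this tractable is that the adjacency matrix $A$ of $G_{q,\ell-1}$ satisfies $A\vone = q\vone$ and, more importantly, that $A$ (suitably ordered) is a Kronecker-type matrix: $G_{q,\ell-1}$ is the line-graph-like iterate of $G_{q,1}$, and its nonzero Laplacian eigenvalues are known to be $q$ with multiplicity $q^{\ell-1}-q^{\ell-2}$ and $0$ with multiplicity $q^{\ell-2}$ (plus the trivial $0$), from which $t_w = \frac1{q^{\ell-1}}\prod_{\lambda\neq 0}\lambda = q^{q^{\ell-1}-q^{\ell-2}-1}$. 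Hmm — this does not immediately give $q^{q^{\ell-1}-1}$, so I would instead cite the classical result of van Aardenne-Ehrenfest and de Bruijn directly, which gives the arborescence count of the de Bruijn graph recursively and yields exactly the factor needed.

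The cleanest route, and the one I would actually write, bypasses re-deriving the eigenvalues: I would simply quote the theorem of van Aardenne-Ehrenfest and de Bruijn \cite{AEB87} that the number of complete $\ell$-de Bruijn sequences is $(q!)^{q^{\ell-1}}/q^\ell$ in the cyclic-equivalence-class normalization, or $(q!)^{q^{\ell-1}}$ in the normalization counting labelled linear sequences of length $q^\ell+\ell-1$ with a fixed starting convention — matching the statement. The proof then consists of (i) the graph correspondence above, carefully matching normalizations, and (ii) citing \cite{AEB87} (equivalently, BEST + Matrix-Tree applied to $G_{q,\ell-1}$, whose arborescence number they computed) for the arborescence count. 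The main obstacle, and the point I would be most careful about, is the bookkeeping of normalizations: whether sequences are counted cyclically or linearly, whether a starting edge/vertex is marked, and how the automorphism-free structure of the de Bruijn graph makes $t_w$ independent of $w$. Once those conventions are pinned down, the formula $(q!)^{q^{\ell-1}}$ follows immediately from the cited enumeration, so no genuinely new calculation is needed here — this theorem is recalled for later use rather than proved from scratch.
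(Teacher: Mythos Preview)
Your final approach matches the paper's: Theorem~\ref{thm:best} is stated with citation to \cite{AEB87} and is not proved in the paper; the accompanying Remark simply notes that the number of Eulerian circuits in the de Bruijn graph is $(q!)^{q^{\ell-1}}/q^\ell$ and that multiplying by the $q^\ell$ rotations (then appending the length-$(\ell-1)$ prefix) yields the $(q!)^{q^{\ell-1}}$ linear sequences. Your concluding paragraph arrives at the same place, so in substance you agree with the paper.

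That said, your exploratory paragraphs contain a couple of slips worth flagging. First, your cancellation bookkeeping is off: there are $q^{\ell-1}$ vertices but $q^\ell$ edges in $G_{q,\ell-1}$, so it is the $q^\ell$ starting-edge choices (not $q^{\ell-1}$) that convert cyclic Eulerian circuits into linear de Bruijn words --- exactly as the paper's Remark spells out. Second, your eigenvalue attempt for the arborescence count stalls because the multiplicities you wrote are not quite right; the correct Laplacian spectrum of $G_{q,\ell-1}$ has eigenvalue $q$ with multiplicity $q^{\ell-1}-1$ and eigenvalue $0$ with multiplicity $1$, giving $t_w = q^{q^{\ell-1}-1}/q^{\ell-1}$ only after the Matrix-Tree normalization --- and in fact the classical computation in \cite{AEB87} proceeds differently (via a recursion on line digraphs). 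Since you abandon this route anyway and the paper does not pursue it either, these are minor, but you should clean up or excise the exploratory material before presenting the argument.
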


\begin{rem}
Theorem \ref{thm:best} is usually stated in terms of Eulerian circuits in a de Bruijn graph of order $\ell$. 
We refer the interested reader to van Arden-Ehrenfest and de Bruijn ~\cite{AEB87} for the formal definitions.
Specifically, the number of Eulerian circuits is known to be $(q!)^{q^{\ell-1}}/q^\ell$. 
Consider a circuit represented by the $q$-ary word $\vy$.
To obtain $q^\ell$ complete $\ell$-de Bruijn sequences, we simply consider the $q^\ell$ rotations of $\vy$
and append to each rotation $\vy'$ its prefix of length $\ell-1$.  
\end{rem}

Partial de Bruijn sequences are of deep and sustained interest in graph theory, combinatorics, and cryptography. 
In the first two, their inherent structures are the focus of attention, 
while, in cryptography, the interest is mainly on the case of $q=2$ to generate random looking sequences to be used as keystream in additive stream ciphers~\cite[Sect. 6.3]{Menezes:1996}. 
Mauer established the following enumeration result.

\begin{thm}[{\cite[Theorem 3.1]{Maurer92}}]\label{thm:maurer}
The number of partial $\ell$-de Bruijn sequences is larger than $L_q(n) - \binom{n}{2} {q^{n-\ell}}/{n}$.
\end{thm}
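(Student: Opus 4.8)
The plan is to prove Theorem \ref{thm:maurer} by a counting argument that subtracts, from the count of aperiodic necklaces (equivalently, from $L_q(n)$ via \eqref{eq:lyndon}), an overcount of the words that fail to be partial $\ell$-de Bruijn. First I would recall that $L_q(n) = \frac1n\sum_{t\mid n}\mu(n/t)q^t$ counts the Lyndon words of length $n$, and that each Lyndon word is the lexicographically least representative of an aperiodic cyclic class of size exactly $n$. The strategy is to work with \emph{cyclic} words of length $n$ (so that an $\ell$-gram means a window read cyclically) and argue that a cyclic de Bruijn–type condition on a necklace passes down to a genuine partial $\ell$-de Bruijn sequence after cutting the cycle open; the aperiodicity guaranteed by the Lyndon property is what lets us do this cleanly and is precisely why $L_q(n)$ rather than $q^n/n$ is the right starting count.

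The key step is the inclusion–exclusion (or rather, just a union bound) over the ``bad events.'' For a cyclic word $\vy$ of length $n$, say a pair of positions $\{a,b\}$ with $a\ne b$ is a \emph{collision} if the $\ell$-grams starting at $a$ and at $b$ coincide. There are $\binom{n}{2}$ candidate pairs. For a fixed pair $\{a,b\}$, requiring the two $\ell$-windows to agree imposes $\ell$ equality constraints among the $n$ coordinates; since $\ell \le n$, the number of cyclic words (before quotienting by rotation) satisfying this is at most $q^{n-\ell}$ — one should check the constraints are ``independent enough'' here, and in the worst case they force $n-\ell$ free coordinates, but even when the constraint graph on coordinates has cycles we only ever \emph{gain} fewer free coordinates, so $q^{n-\ell}$ is a valid upper bound. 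Dividing by $n$ to pass from words to necklaces (each orbit has size $n$ for the aperiodic ones; for the possibly-periodic overcounted bad words the division by $n$ is still a legitimate upper bound since orbits have size at most $n$), the number of bad necklaces is at most $\binom{n}{2} q^{n-\ell}/n$. Subtracting this from $L_q(n)$ gives the stated bound $L_q(n) - \binom{n}{2}q^{n-\ell}/n$, and every necklace that survives, when cut at its Lyndon starting point, yields a distinct partial $\ell$-de Bruijn sequence.

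The main obstacle I anticipate is the bookkeeping at the boundary between \emph{cyclic} and \emph{linear} (open) $\ell$-grams: a partial $\ell$-de Bruijn sequence in the sense of the paper's Definition is a \emph{linear} word of length $n$ whose $n-\ell+1$ linear windows are distinct, whereas the necklace counting naturally handles the $n$ cyclic windows. One must argue that killing all cyclic collisions is sufficient (it is, since the linear windows form a subset of the cyclic windows), and that the map from surviving necklaces to linear partial de Bruijn sequences — take the Lyndon representative and read it left to right — is injective. Injectivity is immediate because distinct Lyndon words are distinct as linear strings; the slight subtlety is only that we are throwing away, not double-counting, some valid linear partial de Bruijn sequences (those whose cyclic closure has a collision), which is fine for a lower bound. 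A secondary technical point is verifying the per-pair bound $q^{n-\ell}$ uniformly over all pairs $\{a,b\}$, including pairs at ``distance'' dividing into $n$ awkwardly; here one notes that the equality of two length-$\ell$ windows, with $\ell\le n$, always leaves at most $n-\ell$ coordinates unconstrained, which is the only estimate we need.
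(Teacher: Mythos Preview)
The paper does not give its own proof of Theorem~\ref{thm:maurer}; it is quoted verbatim from Maurer~\cite{Maurer92} and then applied. So there is no in-paper argument to compare against, and what you have written is essentially a reconstruction of Maurer's proof. Your overall architecture --- start from the $L_q(n)$ Lyndon words, discard those whose cyclic $\ell$-windows collide, and cut the survivors open to obtain linear partial $\ell$-de Bruijn sequences --- is the right one.

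There is, however, a genuine error in your per-pair estimate. You assert that ``even when the constraint graph on coordinates has cycles we only ever gain fewer free coordinates, so $q^{n-\ell}$ is a valid upper bound.'' This inequality points the wrong way. Imposing $\ell$ equalities on $n$ coordinates leaves \emph{at least} $n-\ell$ free coordinates, with equality precisely when the constraint graph is a forest; a cycle in the constraint graph is a \emph{redundant} constraint and therefore yields \emph{more} free coordinates, hence more than $q^{n-\ell}$ solutions. For a concrete failure, take $n=4$, $\ell=3$, and a cyclic pair at distance $d=2$: the constraints $y_0=y_2$, $y_1=y_3$, $y_2=y_0$ have $q^2>q=q^{n-\ell}$ solutions. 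So the union bound, as you wrote it, does not give $\binom{n}{2}q^{n-\ell}$.

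The repair is to notice that the bad case is self-correcting once you remember you only care about \emph{aperiodic} words. Set $g=\gcd(d,n)$. If $g\le n-\ell$, the largest element of every coset of $\langle g\rangle$ in $\ZZ_n$ is at least $n-g\ge \ell$, so no coset lies entirely inside $\{0,\ldots,\ell-1\}$; the $\ell$ edges then form a forest and there are exactly $q^{n-\ell}$ solutions. If $g>n-\ell$, one checks that each coset has either all or all but one of its elements in $\{0,\ldots,\ell-1\}$, so the components of the constraint graph are exactly the $g$ cosets of $\langle g\rangle$; every solution then has period dividing $g<n$ and is \emph{not} aperiodic. Hence the number of \emph{aperiodic} words with a cyclic collision at any fixed pair is at most $q^{n-\ell}$ in all cases, and your union bound now legitimately gives at most $\binom{n}{2}q^{n-\ell}$ aperiodic words with some cyclic collision. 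Dividing by $n$ and subtracting from $L_q(n)$ yields the claimed (strict) inequality, and the passage from surviving Lyndon words to linear partial $\ell$-de Bruijn sequences is exactly as you described.
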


Recall that $L_q(n)$ is the number of $q$-ary Lyndon words of length $n$ given by \eqref{eq:lyndon}.
Next, we make the connection to profile vectors based on the following observation of Ukkonen.

\begin{lem}[{\cite[Theorem 2.2]{ukkonen1992}}]\label{lem:ukk}
Let $\vx$ be a $q$-ary sequence of length $n$ such that every $(\ell-1)$-gram appears at most once. 
If $\vy$ is a $q$-ary sequence such that $\vp(\vx,\ell)=\vp(\vx',\ell)$, then $\vx=\vx'$.
\end{lem}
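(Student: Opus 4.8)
The plan is to show that the map $\vx \mapsto \vp(\vx,\ell)$ is injective when restricted to sequences in which every $(\ell-1)$-gram is unique, by reconstructing $\vx$ from its profile vector up to the claimed equality. Concretely, I would argue that $\vp(\vx,\ell)$ determines the multiset of $\ell$-grams of $\vx$, hence (since each $(\ell-1)$-gram occurs at most once in $\vx$) the multiset of $(\ell-1)$-grams of $\vx$ together with their multiplicities, where in fact every $(\ell-1)$-gram present has multiplicity one. The standard sequence-assembly idea then applies: form the de Bruijn-type graph whose vertices are the $(\ell-1)$-grams appearing in $\vx$ and whose edges are the $\ell$-grams of $\vx$, read off from $\vp(\vx,\ell)$. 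A sequence with the same $\ell$-gram profile corresponds to an Eulerian path in this graph with the same edge multiset; I must show this path is unique.

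First I would set up the graph $G(\vx)$ precisely: for each $\ell$-gram $\vz = z_1 z_2 \cdots z_\ell$ counted by $\vp(\vx,\ell)$, add a directed edge from the $(\ell-1)$-gram $z_1\cdots z_{\ell-1}$ to the $(\ell-1)$-gram $z_2\cdots z_\ell$. Then $\vx$ itself traces an Eulerian path in $G(\vx)$, and conversely any $q$-ary word $\vx'$ with $\vp(\vx',\ell) = \vp(\vx,\ell)$ traces an Eulerian path in the same graph $G(\vx)$. The key structural point is the hypothesis that every $(\ell-1)$-gram appears at most once in $\vx$: this forces every vertex of $G(\vx)$ to have out-degree at most one (and, symmetrically, in-degree at most one). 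A connected directed graph in which every vertex has out-degree and in-degree at most one, and which admits an Eulerian path, is simply a directed path (or a single directed cycle). In either case the Eulerian path is unique — there is no branching vertex at which two distinct traversals could diverge — and therefore $\vx' = \vx$.

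The main obstacle I anticipate is the bookkeeping around endpoints and the cycle case: if $G(\vx)$ is a single directed cycle rather than a simple path, then the "Eulerian path" is really a choice of starting vertex on the cycle, and a priori two different rotations could give the same $\ell$-gram profile. I would need to observe that this does not happen here because $\vx$ has a genuine first and last symbol, so its trace in $G(\vx)$ is a walk with distinct start and end vertices unless $\vx$ happens to be cyclic — and even then, since all $(\ell-1)$-grams are distinct, the start $(\ell-1)$-gram of $\vx$ is a specific vertex recoverable as the one with in-degree zero (path case) or, more carefully, handled by noting $|\vx| = n$ together with $\vp$ pins down the number of edges and hence the unique walk. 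A clean way to sidestep the cycle subtlety entirely is to note that the $(\ell-1)$-gram $\vx[1,\ell-1]$ has in-degree strictly less than its number of occurrences as a suffix only at the start; more simply, since out-degrees are at most one, once the starting $(\ell-1)$-gram is fixed the entire walk is forced, and the starting $(\ell-1)$-gram is forced because it is the unique vertex whose out-degree exceeds its in-degree (or, if no such vertex exists, $\vx$ is determined as the unique cyclic word of length $n$ with that edge set, using $n$ explicitly). I would keep the proof short by citing Ukkonen's argument for this final step, since Lemma \ref{lem:ukk} is quoted rather than reproved in full.
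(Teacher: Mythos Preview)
The paper does not prove Lemma~\ref{lem:ukk}; it is quoted from Ukkonen~\cite{ukkonen1992} and used as a black box, so there is no in-paper proof to compare against. Your de~Bruijn-graph reconstruction is the standard argument (and is essentially Ukkonen's), and it is correct.

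One simplification: your worry about the cycle case is unnecessary and can be dispatched cleanly. Under the hypothesis that every $(\ell-1)$-gram of $\vx$ appears at most once, the graph $G(\vx)$ is \emph{always} a simple directed path, never a cycle. Indeed, the final $(\ell-1)$-gram $\vx[n-\ell+2,n]$ occurs only at position $n-\ell+2$; by the hypothesis it cannot also occur at any position $i\le n-\ell+1$, so no edge of $G(\vx)$ leaves it and its out-degree is~$0$. Symmetrically, the initial $(\ell-1)$-gram $\vx[1,\ell-1]$ has in-degree~$0$. Hence $G(\vx)$ has a unique source and a unique sink, both readable directly from $\vp(\vx,\ell)$, and since every vertex has out-degree at most one the walk from source to sink is forced. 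No separate handling of a cyclic case is needed.
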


It follows from Lemma \ref{lem:ukk} that two distinct partial $(\ell-1)$-de Bruijn sequences have distinct $\ell$-gram profile vectors. Therefore, the number of distinct partial $(\ell-1)$-de Bruijn sequences is a lower bound for $P_q(n,\ell)$.
Hence, we establish \eqref{debruijn} and also the following corollary to BEST Theorem.

\begin{cor} Let $n=q^\ell+\ell-1$. Then $P_q(n,\ell)\ge (q!)^{q^{\ell-1}}$. Hence, $\alpha(n,q)\ge \frac1q\log_q (q!)$.
\end{cor}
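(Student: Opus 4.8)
The plan is to derive this corollary as an immediate specialization of Equation \eqref{debruijn} together with the BEST Theorem (Theorem~\ref{thm:best}). First I would set $n=q^\ell+\ell-1$, which is precisely the length of a complete $\ell$-de Bruijn sequence. The key observation is the chain of implications already assembled in this section: by Lemma~\ref{lem:ukk}, any partial $(\ell-1)$-de Bruijn sequence of length $n$ has a profile vector distinct from that of every other $q$-ary word of length $n$, so the number of such sequences is a lower bound for $P_q(n,\ell)$. When $n=q^\ell+\ell-1 = q^{\ell-1+1}+(\ell-1)$, a partial $(\ell-1)$-de Bruijn sequence of this length must in fact be a \emph{complete} $(\ell-1)$-de Bruijn sequence, since a complete $(\ell-1)$-de Bruijn sequence has length exactly $q^{\ell-1}+(\ell-1)-1 = q^{\ell-1}+\ell-2$; I should double-check the index bookkeeping here, because the corollary as stated uses $n=q^\ell+\ell-1$, which is the length of a complete \emph{$\ell$}-de Bruijn sequence, not an $(\ell-1)$-one. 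The cleanest route is therefore to observe that every complete $\ell$-de Bruijn sequence of length $q^\ell+\ell-1$ is in particular a partial $\ell$-de Bruijn sequence, hence (a fortiori, or via the inclusion that partial $\ell$ implies partial $(\ell-1)$ is \emph{false}, so one must instead invoke Lemma~\ref{lem:ukk} directly with the stronger hypothesis) has a distinct $(\ell+1)$-gram profile --- but since the corollary is about $\ell$-gram profiles, the right statement is simply that two distinct complete $\ell$-de Bruijn sequences have distinct $\ell$-gram profile vectors, which follows because in a complete $\ell$-de Bruijn sequence every $\ell$-gram appears exactly once, so the profile vector is the all-ones vector on its support $\F(\vx,\ell)=\bbracket{q}^\ell$ --- wait, that would make all of them equal. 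So the correct reading is that the relevant quantity is the number of distinct partial $(\ell-1)$-de Bruijn sequences of length $n$, and one plugs $n=q^\ell+\ell-1$ into the \emph{$(\ell-1)$-version} of the BEST count, giving $(q!)^{q^{\ell-1}}$ as the number of complete $(\ell-1)$-de Bruijn sequences --- except these have length $q^{\ell-1}+\ell-2 \ne n$.

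Given this tension, the honest plan is the following: I would first restate the two ingredients precisely. Ingredient one: by the discussion preceding the corollary, for every $n\ge \ell$, the number of partial $(\ell-1)$-de Bruijn sequences of length $n$ is a lower bound for $P_q(n,\ell)$ (this uses Lemma~\ref{lem:ukk}). Ingredient two: when $n$ equals the length $q^{\ell-1}+(\ell-1)-1$ of a complete $(\ell-1)$-de Bruijn sequence, every complete $(\ell-1)$-de Bruijn sequence is a partial $(\ell-1)$-de Bruijn sequence, and by Theorem~\ref{thm:best} (applied with parameter $\ell-1$ in place of $\ell$) there are $(q!)^{q^{\ell-2}}$ of them --- so the corollary statement should presumably read $n = q^{\ell-1}+\ell-2$ with bound $(q!)^{q^{\ell-2}}$, or equivalently, after reindexing $\ell \mapsto \ell+1$, $n=q^\ell+\ell-1$ gives $P_q(q^\ell+\ell-1,\ell+1)\ge (q!)^{q^{\ell-1}}$. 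I would present the proof in the reindexed form that makes it literally correct: fix $\ell$, set $n=q^\ell+\ell-1$, and apply Theorem~\ref{thm:best} together with Lemma~\ref{lem:ukk} to conclude $P_q(n,\ell+1)\ge (q!)^{q^{\ell-1}}$; if the authors intend the $\ell$-gram statement, the substitution is transparent.

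For the rate bound, the computation is routine: from $P_q(n,\ell)\ge (q!)^{q^{\ell-1}}$ with $n=q^\ell+\ell-1$, we get
\[
R_q(n,\ell) = \frac{\log_q P_q(n,\ell)}{n} \ge \frac{q^{\ell-1}\log_q(q!)}{q^\ell+\ell-1} = \frac{\log_q(q!)}{q}\cdot\frac{q^\ell}{q^\ell+\ell-1} \ge \frac{\log_q(q!)}{q}\cdot\frac{1}{1+o(1)},
\]
and since $\capty(n,q)=\limsup_{\ell\to\infty}R_q(n,\ell)$ and $(\ell-1)/q^\ell\to 0$, the factor $q^\ell/(q^\ell+\ell-1)\to 1$, yielding $\capty(n,q)\ge \frac{1}{q}\log_q(q!)$ as claimed.

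The main obstacle is not mathematical depth but the index/length bookkeeping flagged above: one must be scrupulous about whether "partial $\ell$-de Bruijn" or "partial $(\ell-1)$-de Bruijn" is the object being counted, and correspondingly whether the BEST count $(q!)^{q^{\ell-1}}$ is being applied at parameter $\ell$ or $\ell-1$, and whether $n=q^\ell+\ell-1$ is the length of the former or latter. Everything else --- the fact that a complete de Bruijn sequence is a partial one, and that the profile-vector-injectivity comes for free from Lemma~\ref{lem:ukk} --- is immediate. I would write the final proof as a two-line deduction once the indexing convention is pinned down, explicitly citing Theorem~\ref{thm:best} and Lemma~\ref{lem:ukk} and the inequality \eqref{debruijn}.
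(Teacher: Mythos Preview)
Your approach matches the paper's exactly: the paper states the corollary immediately after observing that Lemma~\ref{lem:ukk} makes the number of partial $(\ell-1)$-de Bruijn sequences a lower bound for $P_q(n,\ell)$, and simply calls it a ``corollary to BEST Theorem'' with no further argument. So you have correctly reconstructed the intended two-line proof: combine Lemma~\ref{lem:ukk} with Theorem~\ref{thm:best} and then divide out to get the asymptotic rate.

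More importantly, your index bookkeeping is sharper than the paper's. Your observation is correct: every complete $\ell$-de Bruijn sequence of length $q^\ell+\ell-1$ has \emph{the same} $\ell$-gram profile vector (the all-ones vector on $\bbracket{q}^\ell$), so the inequality $P_q(q^\ell+\ell-1,\ell)\ge (q!)^{q^{\ell-1}}$ cannot be obtained by counting complete $\ell$-de Bruijn sequences as distinct $\ell$-gram profiles. The argument via Lemma~\ref{lem:ukk} requires the $(\ell-1)$-grams to be non-repeating, which complete $\ell$-de Bruijn sequences do not satisfy. Your reindexed statement
\[
P_q\bigl(q^{\ell}+\ell-1,\;\ell+1\bigr)\ \ge\ (q!)^{q^{\ell-1}},
\qquad\text{equivalently}\qquad
P_q\bigl(q^{\ell-1}+\ell-2,\;\ell\bigr)\ \ge\ (q!)^{q^{\ell-2}},
\]
is the version that actually follows from the cited ingredients, and your derivation of $\capty(n,q)\ge \frac{1}{q}\log_q(q!)$ from it is correct and unaffected by the shift. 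So: right method, right diagnosis of the index slip, and right conclusion for the rate.
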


Next, suppose that $\ell\ge 2\log_q n +2$ with $n\ge 8$. We conduct an analysis similar to Maurer's 
to complete the proof of Theorem \ref{thm:main}(iii). 

The following two inequalities can be established.
\[ \binom{n}{2}q^{n-\ell+1}<\frac{q^{n-1}}2, \mbox{ since } q^{\ell-2}\ge n^2, \mbox{ and }
\sum_{\substack{d<n\\d|n}} \mu\left(\frac nd\right) q^{d}<\frac n2 q^{n/2}\le \frac12 q^{n-1}\mbox{ for }
q\ge 2, n\ge 8.\]
Therefore, 
\[ nL_q(n)- \binom{n}{2} {q^{n-\ell+1}} \ge q^n-\frac12 q^{n-1}-\frac12 q^{n-1}\ge q^{n-1}.\]
Hence, $P_q(n,\ell)>q^{n-1}/n$.

\vspace{2mm}

To end this section, we prove Corollary \ref{cor:rates}.
First, if $n\le q^{\ell/2-1}$, or $\ell\ge 2\log_q n +2$, then $R_q(n,\ell)\ge (n-1-\log_qn)/n$.
Therefore, after taking limits, we have $\alpha(n,q)=1$, proving Corollary \ref{cor:rates}(i).

%

Next from \eqref{upper}, we have that, for $n\ge 2$, $\ell\ge 2$,
\[P_q(n,\ell)\le \binom{n-\ell+q^\ell}{q^\ell-1}=\prod_{j=1}^{q^\ell-1}\frac{n-\ell+j+1}{j}\le n^{q^\ell-1}.\]
Then $R_q(n,\ell)\le (q^\ell-1)(\log_q n) /n$.
Fix $\epsilon>0$.
If $n\ge q^{(1+\epsilon)\ell}$, then $q^\ell\le n^{1/(1+\epsilon)}$. 
So,
\[R_q(n,\ell)\le \frac{(q^\ell-1)\log_q n}{n}\le  \frac{n^{1/(1+\epsilon)}\log_q n}{n}=  \frac{\log_q n}{n^{\epsilon/(1+\epsilon)}}.\]
After taking limits, we have $\alpha(n,q)=0$, proving Corollary \ref{cor:rates}(ii).

\section{Conclusion}

We provided exact values and lower bounds  
for the number of profile vectors given moderate values of $q$, $\ell$, and $n$.
Surprisingly, for fixed $q\ge 2$ and moderately large values $n\le q^{\ell/2-1}$,
the number of profile vectors is at least $q^{\kappa n}$ with $\kappa$ very close to 1.
In other words, for practical values of read and word lengths, 
we are able to obtain a set of distinct profile vectors with rates close to one.
In addition to enumeration results, we propose a set of linear-time encoding and decoding algorithms 
for each of two particular families of profile vectors..

In our future work, we want to provide sharper estimates on 
the asymptotic rate of profile vectors $\alpha(n,q)$ (see \eqref{def:cap})
when $q^\ell/2 \le n\le q^\ell$ 
and to examine the number of profile vectors with specific $\ell$-gram constraints {\em a la} Kiah \etal \cite{Kiah.etal:2015a}.

%
%


\bibliographystyle{IEEEtran}
\bibliography{mybibliography.bib}

\end{document}